\newcommand{\black}[1]{\textcolor{black}{#1}}
\newcommand{\R}{\mathbb{R}}
\newcommand{\C}{\mathbb{C}}
\newcommand{\floor}[1]{\left\lfloor #1 \right\rfloor}
\newcommand{\LaurentR}[1]{\R \qty[{#1}, {#1}^{-1}]}
\newcommand{\LaurentC}[1]{\C \qty[{#1}, {#1}^{-1}]}
\DeclareMathOperator{\Parity}{Parity}
\newtheorem{theorem}{Theorem}
\begin{document}
\title{Robust Angle Finding for Generalized Quantum Signal Processing}

\author{Shuntaro Yamamoto}\email{shun0923@g.ecc.u-tokyo.ac.jp}
\affiliation{Department of Applied Physics, University of Tokyo, 7-3-1 Hongo, Bunkyo-ku, Tokyo 113-8656, Japan}

\author{Nobuyuki Yoshioka}\email{nyoshioka@ap.t.u-tokyo.ac.jp}
\affiliation{Department of Applied Physics, University of Tokyo, 7-3-1 Hongo, Bunkyo-ku, Tokyo 113-8656, Japan}

\begin{abstract}
Quantum Signal Processing (QSP), together with the quantum singular value transformation, is one of the central quantum algorithms due to its efficiency and generality in many fields including quantum simulation, quantum machine learning, and quantum cryptography.
The largest bottleneck of QSP and its family is its difficulty in finding the phase angle sequence for signal processing.
We find that this is in particular prominent when one employs the generalized formalism of the QSP, or the GQSP, to employ arbitrary single-qubit unitaries for signal processing operator. In this work, we extend the framework of GQSP and propose a robust angle finding algorithm.
The proposed angle finding algorithm, based on Prony's method, successfully generates angle sequence of precision $10^{-13}$ up to polynomial degrees of hundreds within a second.
By applying our method to Hamiltonian simulation, we find that the number of calls, or queries, to signal operators are essentially halved compared to the ordinary framework of QSP.
\end{abstract}

\maketitle

\section{Introduction}\label{sec:intro}
Quantum Signal Processing (QSP) and Quantum Singular Value Transformation (QSVT) are among the most general quantum algorithms that unifies numerous existing algorithms~\cite{low2016methodology, low2017optimal, Low2019hamiltonian, Gily_n_2019, Martyn_2021}.
In QSP, one first block-encodes non-unitary matrices using  ancilla qubits, and then perform a polynomial transformation on each qubitized subspace to realize the desired polynomial of matrix. 
It has been pointed out that application of QSP yields query-optimal Hamiltonian simulation algorithms~\cite{low2017optimal, Mizuta2023optimalhamiltonian}, and has also been considered in the context of practical quantum resource estimation~\cite{childs2018toward, yoshioka2022hunting, beverland2022assessing, dalzell2023quantum, sakamoto2023end}.

Although the QSP/QSVT has been continuously developed, improved, and generalized, these algorithms bear the notorious difficulty in determining the concrete circuit structure. Specifically,
 the block-encoded signal operator must be interleaved by signal processing operators with carefully tuned angle sequence, which cannot be determined analytically except for very limited target functions~\cite{mizuta2024recursive}. 
It was initially considered that the difficulty---or classical computation error---in angle finding procedure might practically negate the advantages of QSP/QSVT algorithms~\cite{childs2018toward}. However, recent works have successfully determined the angle sequence for polynomials of orders with hundreds to thousands~\cite{Haah_2019, Dong_2021}.
While the runtime of the algorithm by Haah~\cite{Haah_2019} scales $O(d^3)$ with $d$ being the degree of polynomial, the space complexity of $\tilde{O}(d)$ is considered to be unstable compared to those that require $O({\rm polylog}(d))$, which are viewed as stable~\cite{higham2002accuracy}.
Ref.~\cite{Dong_2021} focuses on the ordinary QSP that employs Pauli rotation as the signal processing operators. This algorithm heavily exploits the symmetry in angle sequence, a feature  not present in the generalized framework of QSP (GQSP)~\cite{motlagh2023generalized} in which the representability of the QSP is significantly extended by employing arbitrary SU(2) single qubit unitary instead of Pauli rotations.

In this work, we make two contributions: (i) extension of the GQSP framework, and (ii) the proposal of a robust angle-finding algorithm that successfully generates the angle sequence for approximation up to extremely high accuracy that is limited by the machine precision.
This angle-finding algorithm represents a strict generalization of the approach presented in Ref.~\cite{Dong_2021}, in that it also relies on the Prony's method for its operation.
Through numerical benchmarking for the approximation of $f(x)=e^{-i\tau x}$--- simulating Hamiltonian dynamics over time $\tau$---we demonstrate that our proposed algorithm can reliably generate angle sequences for GQSP of length in hundreds, up to accuracy of  $10^{-13}$, which is limited by the machine precision during the algorithm, within a second. Furthermore, this generalization enables us to halve the implementation cost of Hamiltonian simulation compared to the ordinary QSP,  in terms of the number of calls to the controlled signal operators.

\section{Preliminaries}
The QSP is a procedure that performs a polynomial transformation $f(W)$ of degree $d$ on a block-encoded operator $W$, interleaving it with signal processing operators~\cite{low2016methodology, low2017optimal, Low2019hamiltonian, Gily_n_2019}.
Specifically, for a signal operator parameterized as $W(\theta)$ with $\theta\in \mathbb{R}$, the objective is to perform $\theta \mapsto f(\theta)$.
Here, following the structure of Appendix A in ~\cite{Martyn_2021}, we introduce two variants of the QSP: $(W_x, S_z)$-QSP and $(W_z, S_x)$-QSP. 
Both variants require identifying suitable angle parameters for the signal process operators. The conditions for determining these parameters are rigorously stated in Theorem~\ref{thm:Wx_QSP_PQ_cond},~\ref{thm:Wx_QSP_f_cond} for the $(W_x, S_z)$-QSP and Theorems~\ref{thm:Wz_QSP_FG_cond},~\ref{thm:Wz_QSP_f_cond} for the $(W_z, S_x)$-QSP.

\subsection{$(W_x, S_z)$-QSP}
The $(W_x, S_z)$-QSP employs $x$-rotation as the signal operator and $z$-rotation as the signal processing operator.
Concretely, we assume that the block encoding is done in a way such that the Hilbert space is {\it qubitized} into direct sum of two-dimensional subspace, and assume that the operations on one of the subspaces are described as
\begin{align}
W_x(\theta)
&\coloneqq 
e^{i \theta X}
= \mqty(x & i \sqrt{1 - x^2} \\ i \sqrt{1 - x^2} & x), \\
S_z(\phi)
&\coloneqq e^{i \phi Z}
= \mqty(e^{i \phi} & 0 \\ 0 & e^{- i \phi}),
\end{align}
where $x\coloneqq\cos\theta \in [-1, 1]$ for $\theta \in \mathbb{R}$ and $X, Z$ are $x, z$ components of Pauli matrices.
Correspondingly, the QSP operation sequence $U_{\vb*{\Phi}}$ for $\vb*{\Phi}\coloneqq (\phi_0, ..., \phi_d)\in\mathbb{R}^{d+1}$ is defined as
\begin{equation}
U_{\vb*{\Phi}}(x)
\coloneqq S_z(\phi_0) \prod_{k = 1}^d W_x(\theta) S_z(\phi_k).
\end{equation}

The goal is to construct the angle sequence $\vb*{\Phi}$ such that we can approximate the transformation as $\ev{U_{\vb*{\Phi}}(x)}{0} = f(\theta)$.
Low and Chuang showed the existence condition of such an angle sequence~\cite{Low2019hamiltonian}, and later constructive method was provided by Gily\'{e}n {\it et al.} in Ref.~\cite{Gily_n_2019}. 
Assuming that the target unitary can be written using complex polynomials $P, Q\in C[x]$ as follows,
\begin{equation}\label{eq:wx_sequence}
U_{P Q}(x)
= \mqty(P(x) & i Q(x) \sqrt{1 - x^2} \\ i Q^*(x) \sqrt{1 - x^2} & P^*(x)),
\end{equation}
then, the following theorems hold:

\begin{theorem} 
\label{thm:Wx_QSP_PQ_cond}(Theorem 3 in~\cite{Gily_n_2019})
$\forall d\in \mathbb{N}, \exists \Phi = (\phi_0, \dots, \phi_d) \in \mathbb{R}^{d+1}$ s.t. \black{$U_{\vb*{\Phi}}(x) = U_{P Q}(x)$ for $x\in[-1, 1]$}
if and only if
\begin{enumerate}[label=(\roman*)]
\item
$\deg(P) \leq d, \deg(Q) \leq d - 1$.
\item
${\rm Parity}(P)=d \bmod 2$, ${\rm Parity}(Q)=d-1 \bmod 2$.
\item
$\forall x \in [-1, 1], \abs{P(x)}^2 + (1 - x^2) \abs{Q(x)}^2 = 1$.
\end{enumerate}
\end{theorem}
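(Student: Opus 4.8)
The plan is to prove both implications by induction on the degree $d$, adding or removing a single $W_x S_z$ layer at a time.

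For the necessity direction ($\Rightarrow$) I would first argue that \emph{every} sequence $U_{\vb*{\Phi}}(x)=S_z(\phi_0)\prod_{k=1}^{d}W_x(\theta)S_z(\phi_k)$ is of the form \eqref{eq:wx_sequence} for some $P,Q$ obeying (i)--(iii). The base case $d=0$ is immediate, since $U_{(\phi_0)}=\mqty(e^{i\phi_0}&0\\0&e^{-i\phi_0})$ gives $P=e^{i\phi_0}$ (degree $0$, even parity) and $Q=0$. For the inductive step I would write $U_{\vb*{\Phi}}(x)=U_{\vb*{\Phi}'}(x)W_x(\theta)S_z(\phi_d)$ with $\vb*{\Phi}'=(\phi_0,\dots,\phi_{d-1})$ and multiply out the three $2\times 2$ matrices explicitly; the new upper-left and upper-right polynomials come out as $P=e^{i\phi_d}\bigl(xP'-(1-x^2)Q'\bigr)$ and $Q=e^{-i\phi_d}\bigl(P'+xQ'\bigr)$. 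Tracking degrees and parities through this recursion --- using that multiplying by $x$ raises the degree bound by one and flips the parity --- yields $\deg P\le d$, $\deg Q\le d-1$ and the asserted parities. Condition (iii) is nothing but the statement that $U_{PQ}(x)$ is unitary at each real $x\in[-1,1]$, which holds because it is a product of matrices that are themselves unitary there; and for the ansatz \eqref{eq:wx_sequence} with real $x$ the off-diagonal orthogonality constraint holds automatically, so unitarity collapses to exactly $\abs{P(x)}^2+(1-x^2)\abs{Q(x)}^2=1$.

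For the sufficiency direction ($\Leftarrow$) I would again induct on $d$, this time \emph{peeling off} the last layer to reduce a degree-$d$ instance to a degree-$(d-1)$ one. Given $P=\sum_k p_k x^k$ and $Q=\sum_k q_k x^k$ obeying (i)--(iii) with $d\ge 1$, I first read off from the coefficient of $x^{2d}$ in (iii) --- which must vanish since the right-hand side is the constant $1$ --- that $\abs{p_d}=\abs{q_{d-1}}$. Hence I may pick $\phi_d\in\mathbb{R}$ with $e^{2i\phi_d}=p_d/q_{d-1}$ (and $\phi_d=0$ in the degenerate case $p_d=q_{d-1}=0$) and set $U_{P'Q'}(x)\coloneqq U_{PQ}(x)S_z(-\phi_d)W_x(\theta)^{-1}$. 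Multiplying out gives $P'=xe^{-i\phi_d}P+(1-x^2)e^{i\phi_d}Q$ and $Q'=xe^{i\phi_d}Q-e^{-i\phi_d}P$; the choice of $\phi_d$ was engineered precisely so that it simultaneously kills the degree-$(d+1)$ coefficient of $P'$ and the degree-$d$ coefficient of $Q'$, and parity then forces the degree-$d$ coefficient of $P'$ and the degree-$(d-1)$ coefficient of $Q'$ to vanish as well. Thus $(P',Q')$ satisfies (i) and (ii) at degree $d-1$; it satisfies (iii) because $U_{P'Q'}(x)$ is, for real $x$, a product of genuinely unitary matrices, and a direct computation of all four entries shows it is again of the form \eqref{eq:wx_sequence}. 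By the induction hypothesis there is $\vb*{\Phi}'=(\phi_0,\dots,\phi_{d-1})$ with $U_{\vb*{\Phi}'}(x)=U_{P'Q'}(x)$, and then $\vb*{\Phi}=(\phi_0,\dots,\phi_d)$ works, since $U_{\vb*{\Phi}}(x)=U_{\vb*{\Phi}'}(x)W_x(\theta)S_z(\phi_d)=U_{P'Q'}(x)W_x(\theta)S_z(\phi_d)=U_{PQ}(x)$. The base case $d=0$ is again trivial: (i)--(iii) force $Q=0$ and $\abs{P}=1$, so $P=e^{i\phi_0}$ and $U_{(\phi_0)}=U_{PQ}$.

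I expect the main obstacle to be the degree/parity bookkeeping in the sufficiency step: one has to check that peeling a layer lands exactly in the hypotheses at degree $d-1$, in particular in the ``overparametrized'' cases where $P,Q$ already have degree below the nominal bound but $d$ layers still remain, and that the single free parameter $\phi_d$ really suffices to lower \emph{both} relevant leading coefficients at once --- this is where condition (iii) is essential, through $\abs{p_d}=\abs{q_{d-1}}$. A minor point worth recording is that it is enough to prove the matrix identity on $[-1,1]$: since all entries are polynomials in $x$, or $\sqrt{1-x^2}$ times such polynomials, agreement on the interval forces equality of the underlying coefficients, so no analytic subtleties arise.
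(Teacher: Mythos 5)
Your proof is correct and follows essentially the same carving argument as the paper's Appendix A: for the sufficiency direction you peel off one layer at a time via $U_{P'Q'}(x)=U_{PQ}(x)S_z(\phi_d)^{-1}W_x(\theta)^{-1}$ with the identical formulas for $P',Q'$ and the same phase choice $e^{2i\phi_d}=p_d/q_{d-1}$, and the necessity direction (which the paper only cites) is the routine forward induction. The only superficial difference is that the paper handles the degenerate case $\deg P<d$ by first reducing to $\deg P=0$ and padding with $(\pi/2,-\pi/2)$ pairs, whereas you set $\phi_d=0$ and keep peeling --- both land within the degree-$(d-1)$ hypotheses and are equally valid.
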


In practice, one may desire to perform real polynomial transformation $f(x) \coloneqq f(\theta=\arccos x)$. In such a case, it suffices to find $P, Q$ such that ${\rm Re}[P] =f, {\rm Re}[Q]=0$ since $\ev{U_{P Q}}{+} = f(x)$. The existence condition for phase sequence can be stated rigorously as follows:

\begin{theorem}
\label{thm:Wx_QSP_f_cond}~(Theorem 5 in~\cite{Gily_n_2019})~
$\forall d\in \mathbb{N}, \exists \Phi \in \mathbb{R}^{d+1}$ s.t. $U_{\vb*{\Phi}}(x) =U_{PQ}(x)$ for \black{$x\in [-1, 1]$}, ${\rm Re}[P]=f, {\rm Re}[Q]=0$ if and only if
\begin{enumerate}[label=(\roman*)]
\item
$\deg(f) \leq d$.
\item
${\rm Parity}(f) = d\bmod 2$.
\item
$\forall x \in [-1, 1], \abs{f(x)} \leq 1$.
\end{enumerate}
\end{theorem}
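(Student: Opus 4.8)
The plan is to derive both directions from Theorem~\ref{thm:Wx_QSP_PQ_cond}, which already characterizes, for a given pair of polynomials $(P,Q)$, exactly when an angle sequence with $U_{\vb*{\Phi}}=U_{PQ}$ exists. For the ``only if'' direction, suppose $\vb*{\Phi}\in\mathbb{R}^{d+1}$ satisfies $U_{\vb*{\Phi}}(x)=U_{PQ}(x)$ on $[-1,1]$ with ${\rm Re}[P]=f$ and ${\rm Re}[Q]=0$. Theorem~\ref{thm:Wx_QSP_PQ_cond} then forces $\deg P\le d$, ${\rm Parity}(P)=d\bmod 2$, and $|P(x)|^2+(1-x^2)|Q(x)|^2=1$ on $[-1,1]$. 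Since $f={\rm Re}[P]$, we get $\deg f\le\deg P\le d$ (condition (i)); since $P$ carries a single parity, so does its real part $f$ (condition (ii)); and $|f(x)|=|{\rm Re}[P(x)]|\le|P(x)|\le 1$ on $[-1,1]$ (condition (iii)).

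For the ``if'' direction the idea is to build an admissible pair $(P,Q)$ out of $f$ and then invoke Theorem~\ref{thm:Wx_QSP_PQ_cond}. The requirement ${\rm Re}[Q]=0$ means $Q=i\tilde Q$ with $\tilde Q\in\mathbb{R}[x]$, and ${\rm Re}[P]=f$ means $P=f+ig$ with $g\in\mathbb{R}[x]$. Under this ansatz, condition (iii) of Theorem~\ref{thm:Wx_QSP_PQ_cond} becomes the polynomial identity
\begin{equation}
g(x)^2+(1-x^2)\,\tilde Q(x)^2 \;=\; 1-f(x)^2 ,
\end{equation}
while conditions (i)--(ii) become $\deg g\le d$ with ${\rm Parity}(g)=d\bmod 2$ (so that $P=f+ig$ retains parity $d\bmod 2$), and $\deg\tilde Q\le d-1$ with ${\rm Parity}(\tilde Q)=d-1\bmod 2$. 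Thus everything reduces to writing the polynomial $h:=1-f^2$, which by condition (iii) is nonnegative on $[-1,1]$ and has degree at most $2d$, in the form $g^2+(1-x^2)\tilde Q^2$ within these degree and parity budgets.

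To obtain this decomposition I would invoke the Markov--Luk\'acs theorem (a one-variable Positivstellensatz on an interval), using the parity of $f$ to halve the relevant degrees; equivalently, one may cite the ``complementary polynomial'' lemma already standard in the QSP literature. If $d$ is even then $f$ is even, so $f(x)=F(x^2)$ and $h(x)=1-F(x^2)^2=:H(x^2)$ where $H\ge 0$ on $[0,1]$ has even degree $2\deg F\le d$; Markov--Luk\'acs then gives $H(u)=q(u)^2+u(1-u)\,r(u)^2$ with $\deg q\le\deg H/2$ and $\deg r\le\deg H/2-1$, so substituting $u=x^2$ yields $h(x)=q(x^2)^2+(1-x^2)\bigl(x\,r(x^2)\bigr)^2$, i.e.\ $g(x)=q(x^2)$ (even, degree $\le d$) and $\tilde Q(x)=x\,r(x^2)$ (odd, degree $\le d-1$). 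If $d$ is odd then $f$ is odd, so $f(x)=x\,G(x^2)$ and $h(x)=1-x^2G(x^2)^2=:H(x^2)$ where $H\ge 0$ on $[0,1]$ has odd degree $\le d$; the odd-degree case of Markov--Luk\'acs gives $H(u)=u\,q(u)^2+(1-u)\,r(u)^2$ with $\deg q,\deg r\le(\deg H-1)/2$, so $h(x)=\bigl(x\,q(x^2)\bigr)^2+(1-x^2)r(x^2)^2$, i.e.\ $g(x)=x\,q(x^2)$ (odd, degree $\le d$) and $\tilde Q(x)=r(x^2)$ (even, degree $\le d-1$). The degenerate case $f\equiv 0$ is treated by the explicit choices $g(x)=x,\ \tilde Q(x)=1$ when $d$ is odd and $g\equiv 1,\ \tilde Q\equiv 0$ when $d$ is even. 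In every case we set $P=f+ig$, $Q=i\tilde Q$ and verify (i)--(iii) of Theorem~\ref{thm:Wx_QSP_PQ_cond}: the degree and parity conditions hold by construction, and $|P|^2+(1-x^2)|Q|^2=f^2+\bigl[g^2+(1-x^2)\tilde Q^2\bigr]=f^2+(1-f^2)=1$; that theorem then supplies the desired $\vb*{\Phi}$, with ${\rm Re}[P]=f$ and ${\rm Re}[Q]=0$ built in.

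The main obstacle is precisely this decomposition step: one must secure the Markov--Luk\'acs factorization while simultaneously respecting the degree cap ($\le d$ and $\le d-1$) and the prescribed parities — which is exactly why passing to the variable $u=x^2$ is essential, since a naive Fej\'er--Riesz factorization of $1-f(\cos\theta)^2$ viewed as a trigonometric polynomial would only bound the degrees by $2d$ and $2d-1$. A minor additional point to watch is that the functional identity $g^2+(1-x^2)\tilde Q^2=1-f^2$ on $[-1,1]$ must be promoted to an identity of polynomials (so that condition (iii) of Theorem~\ref{thm:Wx_QSP_PQ_cond}, also phrased pointwise on $[-1,1]$, is met), and that $P=f+ig\not\equiv 0$, so that it genuinely carries parity $d\bmod 2$.
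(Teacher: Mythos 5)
Your proof is correct and follows essentially the same route as the paper (which in turn follows Gilyén et al.): both directions are referred back to the $(P,Q)$-level characterization of Theorem~\ref{thm:Wx_QSP_PQ_cond}, and the ``if'' direction is reduced, via $P=f+ig$, $Q=i\tilde Q$ and the change of variable $u=x^2$ justified by parity, to writing $1-f^2$ as $g^2+(1-x^2)\tilde Q^2$ within the prescribed degree and parity budget. The paper's Appendix A carries out this step by an explicit root factorization of $\tilde A(y)=1-f(\sqrt y)^2$, which is precisely a constructive proof of the Markov--Luk\'acs decomposition you invoke as a lemma, so the two arguments differ only in how explicitly they build the complementary pair $(g,\tilde Q)$.
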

\noindent
\black{Note that this Theorem holds even for ${\rm Re}[Q] \neq 0$ when $({\rm Re}[P])^2 + \qty(1-x^2)({\rm Re}[Q])^2 \leq 1$ is satisfied~\cite{Gily_n_2019}.}

\subsection{$(W_z, S_x)$-QSP} \label{subsec:Wz_convention}
Next, we introduce the $(W_z, S_x)$-QSP which in turn employs  $z$-rotation as the signal operator and $x$-rotation as the signal processing operator.
Let $U(1) \coloneqq \qty{w \in \C \colon \abs{w} = 1}$ denote the unit circle in the complex plane, and let $w \coloneqq e^{i \theta} \in U(1)$ be defined for $\theta \in \R$. Then, the two operators are defined as follows,
\begin{align}
W_z(\theta)
&\coloneqq e^{i \theta Z}
= \mqty(w & 0 \\ 0 & w^{-1}), \\
S_x(\phi)
&\coloneqq e^{i \phi X}
= \mqty(\cos \phi & i \sin \phi \\ i \sin \phi & \cos \phi).
\end{align}
In similar to $(W_x, S_z)$-QSP, the QSP operation sequence $U_{\vb*{\Phi}}$ for $\vb*{\Phi}=(\phi_0, ..., \phi_d)\in \mathbb{R}^{d+1}$ is defined as
\begin{equation}
U_{\vb*{\Phi}}(w)
\coloneqq S_x(\phi_0) \prod_{k = 1}^d W_z(\theta) S_x(\phi_k),
\end{equation}
which \black{is related with Eq.~\eqref{eq:wx_sequence}} as $U_{\vb*{\Phi}}(x) = H U_{\vb*{\Phi}}(w) H$.

Now the problem is to find the appropriate angle sequence $\vb*{\Phi}$ such that $\ev{U_{\vb*{\Phi}}(w)}{0} = f(\theta)$.
 Let us consider a unitary that can be written using Laurent polynomial with real coefficients $F, G \in \LaurentR{w}$ as 
\begin{equation}
U_{F G}(w)
= \mqty(F(w) & i G(w) \\ i G \qty(w^{-1}) & F \qty(w^{-1})).
\end{equation}
Then, one can prove that following Theorems~\ref{thm:Wz_QSP_FG_cond} and~\ref{thm:Wz_QSP_f_cond} holds:

\begin{theorem}(\cite{Haah_2019, chao2020finding}, Appendix A in~\cite{Martyn_2021})
\label{thm:Wz_QSP_FG_cond}
$\forall d\in \mathbb{N}, \exists \vb*{\Phi}\in\mathbb{R}^{d+1}$ s.t. \black{$U_{\vb*{\Phi}}(w) = U_{FG}(w)$ for $w \in U(1)$} if and only if
\begin{enumerate}[label=(\roman*)]
\item
$\deg(F) \leq d, \deg(G) \leq d$.
${\rm Parity}(F) = {\rm Parity}(G) = d \bmod 2$.
\item
$\forall w \in U(1), \abs{F(w)}^2 + \abs{G(w)}^2 = 1$.
\end{enumerate}
\end{theorem}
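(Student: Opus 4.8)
The plan is to prove both directions by induction on the degree $d$, built on the elementary algebraic fact that appending one layer $W_z(\theta)\,S_x(\phi)$ to the right of a matrix of the form $U_{FG}(w)$ again produces a matrix of that form, with the degree raised by exactly one. Concretely, a direct $2\times2$ multiplication shows that $U_{(\phi_0,\dots,\phi_d)}(w) = U_{(\phi_0,\dots,\phi_{d-1})}(w)\,W_z(\theta)\,S_x(\phi_d)$, and that if $U_{(\phi_0,\dots,\phi_{d-1})}(w) = U_{F'G'}(w)$ then the left-hand side equals $U_{FG}(w)$ with $F(w) = \cos\phi_d\,F'(w)\,w - \sin\phi_d\,G'(w)\,w^{-1}$ and $G(w) = \sin\phi_d\,F'(w)\,w + \cos\phi_d\,G'(w)\,w^{-1}$. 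These two recursions are the workhorse of the whole argument.

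For necessity I would run this recursion forward from the base case $d=0$, where $U_{\vb*{\Phi}}(w) = S_x(\phi_0) = U_{FG}(w)$ with the constants $F=\cos\phi_0$, $G=\sin\phi_0$. The recursion keeps $F, G \in \LaurentR{w}$, and the formulas above immediately give $\deg F, \deg G \le d$ together with the common parity $d \bmod 2$, establishing (i). Condition (ii) then costs nothing: for $w\in U(1)$ the matrix $U_{\vb*{\Phi}}(w)$ is a product of unitaries, hence unitary, so the columns of $U_{FG}(w)$ are orthonormal; taking the first column and using $|G(w^{-1})| = |G(w)|$ on the unit circle (valid since $G$ has real coefficients) yields $|F(w)|^2 + |G(w)|^2 = 1$.

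Sufficiency is the substantive direction, and I would prove it by peeling layers off from the right. Given $(F,G)$ satisfying (i) and (ii), I invert the recursion: $F' := (\cos\phi_d\,F + \sin\phi_d\,G)\,w^{-1}$ and $G' := (\cos\phi_d\,G - \sin\phi_d\,F)\,w$. Writing $f_{\pm d}, g_{\pm d}$ for the coefficients of $w^{\pm d}$ in $F, G$, demanding $\deg F', \deg G' \le d-1$ forces the two linear conditions $\cos\phi_d\,f_{-d} + \sin\phi_d\,g_{-d} = 0$ and $\cos\phi_d\,g_{d} - \sin\phi_d\,f_{d} = 0$ on the unit vector $(\cos\phi_d,\sin\phi_d)$; these are simultaneously solvable precisely when $f_d f_{-d} + g_d g_{-d} = 0$. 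The crucial observation is that (ii) \emph{is} this obstruction vanishing: rewriting (ii) as the polynomial identity $F(w)F(w^{-1}) + G(w)G(w^{-1}) = 1$ and extracting the coefficient of $w^{2d}$ gives exactly $f_d f_{-d} + g_d g_{-d} = 0$. Fixing such a $\phi_d$, the reduced pair $(F',G')$ has degree $\le d-1$, parity $(d-1)\bmod2$, and still satisfies (ii) because $U_{F'G'}(w) = U_{FG}(w)\,S_x(\phi_d)^{-1}\,W_z(\theta)^{-1}$ is unitary on $U(1)$; the inductive hypothesis supplies $(\phi_0,\dots,\phi_{d-1})$, and the recursion reassembles them with $\phi_d$ into the desired sequence. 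The base case $d=0$ is just that a real unit vector $(F,G)$ equals $(\cos\phi_0,\sin\phi_0)$ for some $\phi_0\in\R$.

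The step I expect to need the most care is the degenerate case in the peeling argument, when $F$ and/or $G$ fails to actually attain degree $d$ --- for instance $f_d=g_d=0$ while $(f_{-d},g_{-d})\neq(0,0)$, or $G\equiv0$, which by (ii) forces $F$ to be a signed monomial. Then one (or both) of the linear conditions on $\phi_d$ degenerates, and one must check that a suitable choice, possibly $\phi_d=0$, still drops the degree to $d-1$ while respecting the parity bookkeeping; I would dispatch this with a short case split at the top of the inductive step rather than any new idea. An alternative that avoids part of this is to transport Theorem~\ref{thm:Wx_QSP_PQ_cond} through the identification $U_{\vb*{\Phi}}(x) = H\,U_{\vb*{\Phi}}(w)\,H$, at the price of translating the degree and parity conditions between the $x$- and $w$-conventions.
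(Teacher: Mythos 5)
Your proof is correct. It is worth noting, however, that the paper does \emph{not} supply a self-contained proof of this particular theorem: Theorem~\ref{thm:Wz_QSP_FG_cond} is credited to \cite{Haah_2019, chao2020finding} and Appendix~A of \cite{Martyn_2021}, and the halving decomposition given in the paper's Appendix~\ref{app:angle_finding_Wz} explicitly \emph{uses} Theorem~\ref{thm:Wz_QSP_FG_cond} to certify existence of a solution rather than establishing the theorem itself. So your route is genuinely different from the paper's treatment of \emph{this} statement. What you have done is supply the carving-style proof that the paper gives for the two \emph{sibling} theorems but omits here: Appendix~\ref{app:angle_finding_Wx} carves in the $(W_x,S_z)$ convention to prove Theorem~\ref{thm:Wx_QSP_PQ_cond}, and Appendix~\ref{app:angle_finding_gqsp} carves in the GQSP convention to prove Theorem~\ref{thm:GQSP_FG_cond}; yours is the missing middle instance. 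Your central observation — that the solvability obstruction $f_d f_{-d} + g_d g_{-d}=0$ for the $2\times 2$ homogeneous system on $(\cos\phi_d,\sin\phi_d)$ is exactly the $w^{2d}$-coefficient of the identity $F(w)F(w^{-1})+G(w)G(w^{-1})=1$ — is the same mechanism the paper invokes for GQSP, where $f_{d_{\max,F}} f_{d_{\min,F}}^* + g_{d_{\max,G}} g_{d_{\min,G}}^* = 0$ is read off from condition~(ii). The degenerate cases you flag (a vanishing row, or $G\equiv 0$ forcing $F$ to be a signed monomial) are indeed the only delicate point; you can dispatch them cleanly by noting that when one of the two rows vanishes the remaining row still pins down $\phi_d$ (or, if both vanish, $\phi_d=0$ works and the pair $(Fw^{-1},Gw)$ automatically drops degree and flips parity), and in all cases the reduced pair $(F',G')$ inherits unitarity of $U_{F'G'}$ from the group structure, so condition~(ii) is preserved. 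The alternative you mention — conjugating by $H$ to import Theorem~\ref{thm:Wx_QSP_PQ_cond} — is also the route the paper implicitly suggests via the relation $U_{\vb*{\Phi}}(x) = H U_{\vb*{\Phi}}(w) H$ in Sec.~\ref{subsec:Wz_convention}, at the cost of translating degree/parity bookkeeping between $\R[x]$ and $\LaurentR{w}$; your direct carving avoids that translation.
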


\begin{theorem}(\cite{Haah_2019, chao2020finding}, Appendix A in~\cite{Martyn_2021})
\label{thm:Wz_QSP_f_cond}
Let $f(w)\coloneqq f(\theta = \arg w)$ be a Laurent polynomial with real coefficients. Then, $\forall d\in \mathbb{N}, \exists \vb*{\Phi}\in \mathbb{R}^{d+1}$ s.t. $F = f$, \black{$U_{\vb*{\Phi}}(w) = U_{FG}(w)$ for $w \in U(1)$} if and only if
\begin{enumerate}[label=(\roman*)]
\item
$\deg(f) \leq d$.
\item
${\rm Parity}(f)= d \bmod 2$.
\item
$\forall w \in U(1), \abs{f(w)} \leq 1$.
\end{enumerate}
\end{theorem}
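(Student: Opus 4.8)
The plan is to deduce Theorem~\ref{thm:Wz_QSP_f_cond} from Theorem~\ref{thm:Wz_QSP_FG_cond} by constructing a suitable complementary Laurent polynomial, in close analogy with how Theorem~\ref{thm:Wx_QSP_f_cond} is obtained from Theorem~\ref{thm:Wx_QSP_PQ_cond}. Necessity is immediate: if some $\vb*{\Phi}\in\mathbb{R}^{d+1}$ achieves $U_{\vb*{\Phi}}(w)=U_{FG}(w)$ with $F=f$, then Theorem~\ref{thm:Wz_QSP_FG_cond} forces $\deg(f)\le d$ and ${\rm Parity}(f)=d\bmod 2$, while $\abs{f(w)}^2+\abs{G(w)}^2=1$ on $U(1)$ yields $\abs{f(w)}\le 1$; this gives (i)--(iii). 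Hence the substance of the proof is the converse: assuming $f$ satisfies (i)--(iii), I would produce $G\in\LaurentR{w}$ with $\deg(G)\le d$, ${\rm Parity}(G)=d\bmod 2$, and $\abs{f(w)}^2+\abs{G(w)}^2=1$ for all $w\in U(1)$, after which Theorem~\ref{thm:Wz_QSP_FG_cond} applied to $(F,G)=(f,G)$ delivers the angle sequence with $F=f$.

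To build $G$, set $c(w)\coloneqq 1-f(w)f(w^{-1})$. Since $f$ has real coefficients, $\overline{f(w)}=f(w^{-1})$ for $w\in U(1)$, so $c(w)=1-\abs{f(w)}^2$ is a real Laurent polynomial, invariant under $w\mapsto w^{-1}$, and nonnegative on $U(1)$ by (iii). Because ${\rm Parity}(f)=d\bmod 2$, the product $f(w)f(w^{-1})$ contains only even powers of $w$, so $c(w)=\tilde c(w^2)$ for a real, reciprocal-symmetric Laurent polynomial $\tilde c$ supported on exponents in $\{-d,\dots,d\}$, and $\tilde c\ge 0$ on $U(1)$ since $w\mapsto w^2$ maps $U(1)$ onto itself. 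The next step is the (real) Fej\'{e}r--Riesz / spectral-factorization lemma underlying~\cite{Haah_2019,Gily_n_2019}: such a $\tilde c$ factors as $\tilde c(u)=p(u)\,p(u^{-1})$ with $p$ a real polynomial of degree at most $d$. Setting $G(w)\coloneqq w^{-d}\,p(w^2)$ then gives a real Laurent polynomial supported on $\{-d,-d+2,\dots,d\}$, so that $\deg(G)\le d$ and ${\rm Parity}(G)=d\bmod 2$, and $G(w)G(w^{-1})=p(w^2)p(w^{-2})=\tilde c(w^2)=c(w)$, i.e.\ $\abs{f(w)}^2+\abs{G(w)}^2=1$ on $U(1)$. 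Combined with $\deg(f)\le d$ and ${\rm Parity}(f)=d\bmod 2$ from (i)--(ii), the pair $(f,G)$ meets the hypotheses of Theorem~\ref{thm:Wz_QSP_FG_cond}, which closes the argument.

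I expect the main obstacle to be the spectral-factorization step and the bookkeeping around it. One must verify that the spectral factor $p$ can be chosen with \emph{real} coefficients: this uses that the roots of the ordinary polynomial $u^{d}\tilde c(u)$ (of degree at most $2d$) are closed under both $u\mapsto\bar u$ (real coefficients of $\tilde c$) and $u\mapsto u^{-1}$ (reciprocal symmetry), so one may select a conjugation-closed ``half'' of them, and that any zeros on $U(1)$ occur with even multiplicity because $\tilde c\ge 0$ there, so they can be split evenly between $p(u)$ and $p(u^{-1})$. One also has to track the exponent support carefully through the substitution $u=w^2$ to land exactly on $\{-d,\dots,d\}$ with parity $d\bmod 2$, rather than, say, $\{0,\dots,2d\}$. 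The degenerate case $c\equiv 0$, equivalent to $\abs{f}\equiv 1$ on $U(1)$ and hence to $f=\pm w^m$, is covered automatically with $p\equiv 0$ and $G\equiv 0$. All remaining verifications are routine once the factorization lemma is in hand.
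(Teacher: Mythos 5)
Your proposal is correct and follows essentially the same route as the paper's proof in Appendix~\ref{sec:wz_completion_via_root_finding}: necessity comes from the unitarity constraint in Theorem~\ref{thm:Wz_QSP_FG_cond}, and sufficiency is obtained by factoring $A(w)=1-f(w)f(w^{-1})$ to produce a real Laurent polynomial $G$ of the right degree and parity, then invoking Theorem~\ref{thm:Wz_QSP_FG_cond}. The only presentational difference is that you pass to $\tilde c(u)$ with $u=w^2$ and appeal to Fej\'er--Riesz as a black box, which makes the parity and degree bookkeeping for $G(w)=w^{-d}p(w^2)$ automatic, whereas the paper carries out the spectral factorization by hand via an explicit partition of the roots of $w^{2l}A(w)$ and deduces the parity of $G$ from the fact that the retained roots occur in $\{s,-s\}$ pairs.
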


\section{Angle finding for Quantum Signal Processing}\label{sec:angle_finding}

\subsection{Overview} \label{subsec:overview_angle_finding}
First we provide an overview for constructing signal operators for a desired polynomial transformation (see Fig.~\ref{fig:flow_chart} for graphical visualization). 
The main steps are three-fold: choosing the QSP basis, truncating/partitioning the target function, and computing phase factors for truncated functions.
In the second step,  operations known as {\it truncation} and {\it partition} are repeated to generate a sequence of functions $\{f_i\}$, each of which undergoes the third step to approximate $f$ as a linear combination over $f_i$.
As we provide details in Sec.~\ref{subsec:computation_angles}, the third step can be executed mainly in two ways: the algebraic approach (direct method) and the numerical approach (optimization method).

\begin{figure}[t]
    \begin{center}
        \includegraphics[width=0.95\linewidth]{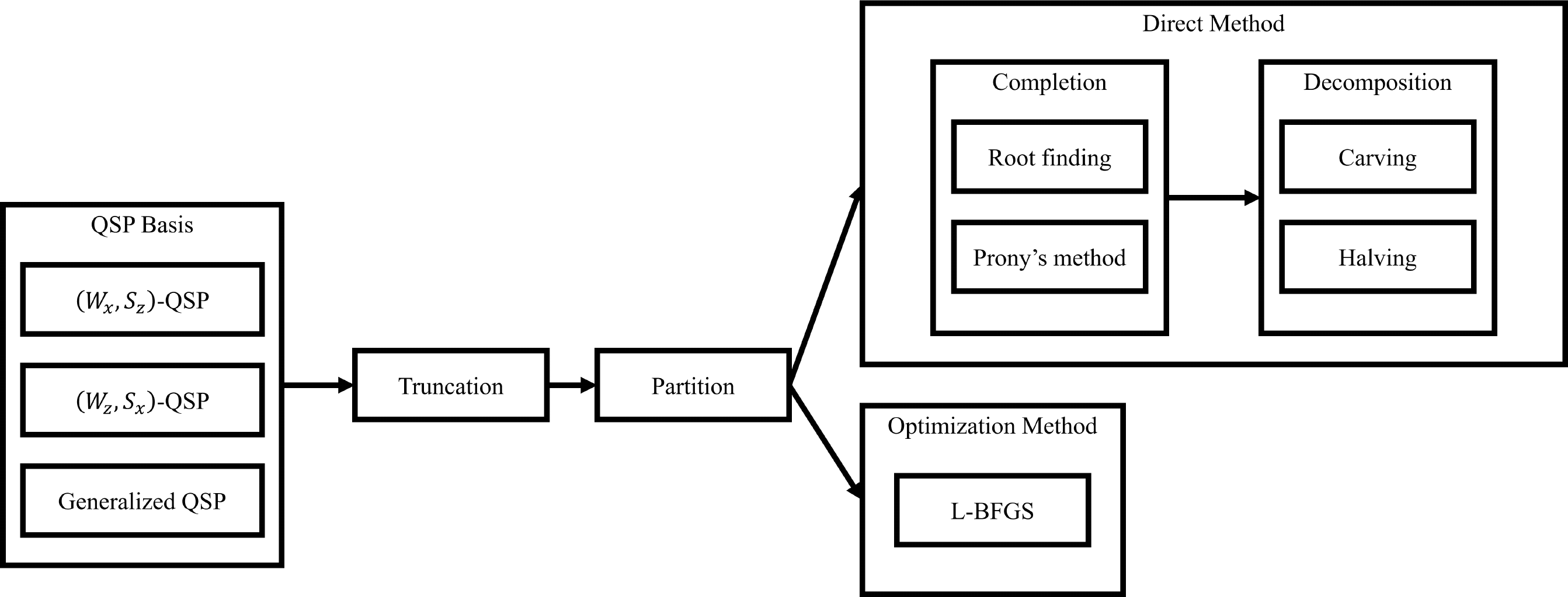}
        \caption{Flow chart of phase angle finding for QSP/GQSP algorithms.}
        \label{fig:flow_chart}
    \end{center}
\end{figure}

\subsection{Truncation and Partition}\label{subsec:truncation}
The objective of the second step is to decompose the target function $f$ into a linear combination of $\{f_i\}$.
First, one performs an operation called the {\it truncation}, which approximates the target function $f$ with a complex-coefficient polynomial $\tilde{f}$ of order not exceeding $d$.
This approximation is typically achieved using the Remez exchange algorithm~\cite{remez1934}, while some functions permit analytic truncation, such as  Hamiltonian simulation discussed in Sec.~\ref{sec:numerics}. Practically, one shall be aware that this step introduces an error into the approximation. For further details, refer to Sec.~\ref{sec:numerics} for details which indicates that the truncation error dominates the entire error at low $d$ regime.

The truncation is followed by an operation called the {\it partition}, which expresses $\tilde{f}$ as a linear combination of functions that satisfy the conditions provided in Theorem~\ref{thm:Wx_QSP_f_cond} or~\ref{thm:Wz_QSP_f_cond} (see Sec.~\ref{sec:angle_finding_gqsp} for the case of GQSP).
For instance, in the case of $(W_x, S_z)$-QSP, we decompose as
\begin{equation}
\tilde{f}(x) = \alpha \qty(f_1(x) + f_2(x) + i f_3(x) + i f_4(x)),
\end{equation}
where $f_1, f_3 \in \R[x]$ are even functions and $f_2, f_4 \in \R[x]$ are odd functions. Note that $\alpha >0$ is taken sufficiently large so that $\forall x \in [-1, 1], \abs{f_i(x)}<1~(i \in \{1,2,3,4\})$ is satisfied.
The partition for the $(W_z, S_x)$-QSP is done using $f_1, f_2, f_3, f_4 \in \LaurentR{w}$ as 
\begin{equation}
\tilde{f}(w) = \alpha \qty(f_1(w) + f_2(w) + i f_3(w) + i f_4(w)),
\end{equation}
where $f_1, f_3$ are even functions, $f_2, f_4$ are odd functions, and $\alpha > 0$ is a normalization factor such that $\forall w \in U(1), \abs{f_i(w)}<1~    (i \in \{1,2,3,4\})$.

\subsection{Computation of phase angles} \label{subsec:computation_angles}
After the operations of truncation and partition, the problem now reduces to finding the phase factors that satisfy Theorem~\ref{thm:Wx_QSP_f_cond} or~\ref{thm:Wz_QSP_f_cond}.
\black{Now} we proceed to review explicit algorithms that compute the angle sequence.
\black{While we describe the methods only briefly here for the sake of conciseness, we provide the details in Appendix~\ref{app:angle_finding_Wx} and~\ref{app:angle_finding_Wz} for $(W_x, S_z)$ and $(W_z, S_x)$-QSP, respectively.}

\subsubsection{Direct method} \label{subsubsec:direct}
Direct methods for angle computation commonly consist of two substeps called  {\it completion} and {\it decomposition}.
In the completion procedure for $(W_x, S_z)$-QSP~(or $(W_z, S_x)$-QSP), one computes the polynomials that satisfy the conditions of Theorem~\ref{thm:Wx_QSP_PQ_cond}~(Theorem~\ref{thm:Wz_QSP_FG_cond}), given a function $f$ that satisfies the conditions stated in Theorem~\ref{thm:Wx_QSP_f_cond}~(Theorem~\ref{thm:Wz_QSP_f_cond}).
One of the most common approaches considered in the literature employs the {\it root finding}, in which one numerically computes all the roots of $1 - \abs{f}^2$, relying on the fact that polynomials $P, Q$ (or $F, G$) can be expressed explicitly using such roots~(\cite[Section 3.1]{Gily_n_2019}, \cite[Section 8]{chao2020finding}).
It has been pointed out in Ref.~\cite{chao2020finding} that, by randomly choosing the root used for computation of $G$, one may enhance the accuracy of the solution.
In the case of $(W_z, S_x)$-QSP, one may alternatively use the Prony's method, which performs the completion by directly computing the Laurent polynomial with real coefficients $G$ \black{such that} $\abs{G}^2=1 - \abs{f}^2$, without relying on the roots~\cite{Ying_2022}.

In the decomposition step, one computes the phase factors $\vb*{\Phi}$ from the polynomials $P, Q$ that satisfies conditions in Theorem~\ref{thm:Wx_QSP_PQ_cond} (or $F, G$  satisfying Theorem~\ref{thm:Wz_QSP_FG_cond}).
Two representative algorithms are the carving~\cite{Gily_n_2019} and  halving~\cite{chao2020finding}.
The carving algorithm first determines \black{the last angle} $\phi_d$ by examining the highest degree coefficient of a polynomial of degree $d$ or less, and then reduces the problem to one involving a pair of polynomials of degree $d - 1$ or less to solve it (\cite[Section 3.1]{Gily_n_2019}). On the other hand, the halving algorithm iteratively decomposes a QSP operation sequence of degree $d$ into the product of two QSP operation sequences of degree $d / 2$ by solving a system of linear equations, eventually breaking it down into the product of $d$ first-degree QSP operation sequences (\cite[Section 8]{chao2020finding}). In the case of the halving algorithm, we can enhance the numerical stability in solving the system of linear equations by adding a constant $\epsilon_{\textrm{cap}}$ of the order of the allowable error to the highest and lowest degree coefficients of the polynomial $f(w)$. Such an operation is known as {\it capitalization} (\cite[Section 8]{chao2020finding}).

\subsubsection{Optimization method} 
\label{subsubsec:optimization}

\black{One may also compute the phase angle sequence $\vb*{\Phi}$ via numerical optimization~\cite{Dong_2021, dong2022infinite, dong2024robust}.
In Ref.~\cite{Dong_2021}, it was proposed that one may use the L-BFGS method, which is one of the most widely-used quasi-Newton methods in operations research, in order to optimize the following loss function for a given function $f$ that satisfies the conditions in Theorem~\ref{thm:Wx_QSP_f_cond}:
\begin{equation}
L(\vb*{\Phi}) = \frac{1}{\tilde{d}} \sum_{j = 1}^{\tilde{d}} \abs{\Re[\ev{U_{\vb*{\Phi}}(x_j)}{0}] - f(x_j)}^2,
\end{equation}
where $\tilde{d} = \left\lceil \frac{d + 1}{2} \right\rceil, x_j = \cos(\frac{(2 j - 1) \pi}{4 \tilde{d}})$.
It is noteworthy that the symmetry structure can be exploited; it can be proved that there exists a global minima $L(\vb*{\Phi})=0$ with $\vb*{\Phi}$ with inversion symmetry of $(\phi_0, ..., \phi_d) = (\phi_d, ..., \phi_0).$
As we later mention in Sec.~\ref{sec:numerics}, such a symmetry is absent in GQSP, in which the performance is heavily degraded compared to the case of ordinary QSP. The sole difference is that the summation over $j$ is taken for $1 \leq j\leq d$ instead of $1 \leq j \leq \tilde{d}.$
}

\section{Generalized Quantum Signal Processing} \label{sec:gqsp}
While the QSP has already successfully unified quantum algorithms, it has been pointed out in Ref.~\cite{motlagh2023generalized} that one may further extend its applicability by employing the SU(2) rotation instead of using solely the $x$- or $z$-rotations.
In this work, we propose a modification to the framework proposed \black{in the first draft of} Ref.~\cite{motlagh2023generalized} and find that we may halve the number of queries to the signal operators to realize the Hamiltonian simulation. 
Concretely, we use two signal operators $W_0$ and $W_1$ for signal processing operator $R$ given as SU(2) rotation, so that the QSP can be naturally applied to Laurent polynomials with complex coefficients. Let us define
\begin{align}
W_0(w)
&\coloneqq \mqty(w & 0 \\ 0 & 1) , \ \ \ \ 
W_1(w)
\coloneqq \mqty(1 & 0 \\ 0 & w^{-1}), \\
R(\theta_k, \phi_k, \lambda_k)
&\coloneqq \mqty(e^{i (\lambda_k + \phi_k)} \cos \theta_k & e^{i \lambda_k} \sin \theta_k \\ e^{i \phi_k} \sin \theta_k & - \cos \theta_k),
\end{align}
which yields the QSP operation sequence $U_{\vb*{\Theta} \vb*{\Phi} \lambda}$ as 
\begin{equation} \label{eq:gqsp_sequence}
U_{\vb*{\Theta} \vb*{\Phi} \lambda}(w)
\coloneqq
\begin{multlined}[t]
R(\theta_{- d_-}, \phi_{- d_-}, \lambda) 
\cdot \qty(\prod_{k = - d_- + 1}^0 W_1(w) R(\theta_k, \phi_k, 0)) 
\cdot \qty(\prod_{k = 1}^{d_+} W_0(w) R(\theta_k, \phi_k, 0)).
\end{multlined}
\end{equation}
In similar to the ordinary QSP, the problem  boils down to finding the appropriate phase factors $\vb*{\Theta}, \vb*{\Phi}, \lambda$ such that $\ev{U_{\vb*{\Theta} \vb*{\Phi} \lambda}(w)}{0} = f(\theta)$ is satisfied.
Our main findings is that existence of such phase factors can be assured when the target unitary can be written using complex-coefficient Laurent polynomials $F, G\in \LaurentC{w}$ as follows:
\begin{equation}
V_{F G}(w)
\coloneqq \mqty(F(w) & i G(w) \\ \cdot & \cdot).
\end{equation}
Let us denote the highest and lowest power with nonzero coefficients for a polynomial $P$ as $d_{{\rm max/min}, P}$.
The existence of angle sequences are stated rigorously by the following theorems.

\begin{theorem}
\label{thm:GQSP_FG_cond}
$\forall d_+, d_-\in \mathbb{N}, \exists \vb*{\Theta} = (\theta_{- d_-}, \ldots, \theta_{d_+}) \in \R^{d_+ + d_- + 1}, \vb*{\Phi} = (\phi_{- d_-}, \ldots, \phi_{d +}) \in \R^{d_+ + d_- + 1}, \lambda \in \R$ s.t. \black{$U_{\vb*{\Theta} \vb*{\Phi} \lambda}(w) = V_{F G}(w)$ for $w \in U(1)$} if and only if
\begin{enumerate}[label=(\roman*)]
\item
$d_{{\rm max},F}, d_{{\rm max},G} \leq d_+$ and $d_{{\rm min},F}, d_{{\rm min},G} \geq - d_-$.
\item
$\forall w \in U(1), \abs{F(w)}^2 + \abs{G(w)}^2 = 1$.
\end{enumerate}
\end{theorem}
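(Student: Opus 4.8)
The plan is to prove both implications by induction on the total degree $n\coloneqq d_++d_-$, in the spirit of the completeness proof for ordinary GQSP in Ref.~\cite{motlagh2023generalized} but adapted to the two-sided Laurent setting created by having both $W_0$ and $W_1$.

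\textbf{Necessity.} For $w\in U(1)$ the matrices $W_0(w),W_1(w)$ and every $R(\theta_k,\phi_k,\lambda_k)$ are unitary, so $U_{\vb*{\Theta}\vb*{\Phi}\lambda}(w)$ is unitary; in particular its first row has unit norm, which is condition (ii). For condition (i), expand the product as a sum over paths choosing one matrix entry from each factor: each $W_0$ contributes a power $w^{0}$ or $w^{1}$, each $W_1$ a power $w^{0}$ or $w^{-1}$, and each $R$ a power $w^{0}$, so every entry of $U_{\vb*{\Theta}\vb*{\Phi}\lambda}(w)$ is a Laurent polynomial supported in $[-d_-,d_+]$. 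Since a Laurent polynomial is determined by its values on the infinite set $U(1)$, matching with $V_{FG}$ forces $F$ and $iG$ to be those entries, hence $d_{\mathrm{max},F},d_{\mathrm{max},G}\le d_+$ and $d_{\mathrm{min},F},d_{\mathrm{min},G}\ge-d_-$.

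\textbf{Sufficiency.} For $n=0$ we must realize constants $F,iG$ with $\abs{F}^2+\abs{G}^2=1$ as the $(1,1)$ and $(1,2)$ entries of $R(\theta_0,\phi_0,\lambda)=\mathrm{diag}(e^{i\lambda},1)R(\theta_0,\phi_0,0)$; fixing the moduli by $\cos\theta_0=\abs{F}$, $\sin\theta_0=\abs{G}$ and the two phases by $\lambda,\phi_0$ always works, with the degenerate cases $F=0$ or $G=0$ handled separately. For $n\ge1$, by the left--right structure of Eq.~\eqref{eq:gqsp_sequence} we may treat $d_+\ge1$ and peel the rightmost block $W_0(w)R(\theta_{d_+},\phi_{d_+},0)$ (the case $d_+=0$, hence $d_-\ge1$, is entirely analogous with $W_1$). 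Put
\begin{equation}
U'(w)\coloneqq U_{\vb*{\Theta}\vb*{\Phi}\lambda}(w)\,R(\theta_{d_+},\phi_{d_+},0)^{\dagger}\,W_0(w)^{-1},
\end{equation}
so $U_{\vb*{\Theta}\vb*{\Phi}\lambda}=U'\cdot W_0R(\theta_{d_+},\phi_{d_+},0)$. Writing $U_{\vb*{\Theta}\vb*{\Phi}\lambda}=\mqty(F & iG\\ \cdot & \cdot)$ and using $W_0(w)^{-1}=\mathrm{diag}(w^{-1},1)$, a short computation gives the first row of $U'$ as $(F',iG')$ with
\begin{align}
F'(w)&=w^{-1}\qty(e^{-i\phi_{d_+}}\cos\theta_{d_+}\,F(w)+i\sin\theta_{d_+}\,G(w)),\\
G'(w)&=-ie^{-i\phi_{d_+}}\sin\theta_{d_+}\,F(w)-\cos\theta_{d_+}\,G(w).
\end{align}
Since $U'$ is again a product of unitaries on $U(1)$, automatically $\abs{F'}^2+\abs{G'}^2=1$ there; and $F',G'$ automatically have lowest power $\ge-d_-$ and highest power $\le d_+$. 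To land in degree $(d_+-1,d_-)$ we must in addition kill the $w^{-d_-}$ coefficient of the bracket in $F'$ and the $w^{d_+}$ coefficient of $G'$; writing $f_k,g_k$ for the coefficients of $w^k$ in $F,G$, these are
\begin{align}
e^{-i\phi_{d_+}}\cos\theta_{d_+}\,f_{-d_-}+i\sin\theta_{d_+}\,g_{-d_-}&=0,\\
ie^{-i\phi_{d_+}}\sin\theta_{d_+}\,f_{d_+}+\cos\theta_{d_+}\,g_{d_+}&=0.
\end{align}

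\textbf{Key point.} These two complex equations in the two real unknowns $\theta_{d_+},\phi_{d_+}$ are compatible precisely because of condition (ii): matching the coefficient of $w^{n}$ in $\abs{F(w)}^2+\abs{G(w)}^2=1$ (which vanishes since $n\ge1$) yields $f_{d_+}\overline{f_{-d_-}}+g_{d_+}\overline{g_{-d_-}}=0$. Taking moduli in either equation then fixes $\tan^2\theta_{d_+}=\abs{f_{-d_-}}^2/\abs{g_{-d_-}}^2\ge0$, which is consistent with the modulus of the other equation by exactly this relation, so a real $\theta_{d_+}$ exists; the phase then determines $\phi_{d_+}$, with $\abs{e^{-i\phi_{d_+}}}=1$ coming out automatically, and the same relation guarantees the two equations select the same $\phi_{d_+}$. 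The several boundary cases---some of $f_{\pm},g_{\pm}$ vanishing, $d_{\mathrm{min}}>-d_-$ or $d_{\mathrm{max}}<d_+$ so that one equation becomes vacuous---are dispatched individually and always leave a valid choice. With such $\theta_{d_+},\phi_{d_+}$, the pair $(F',G')$ satisfies (i),(ii) for $(d_+-1,d_-)$, the induction hypothesis supplies angles realizing $U'$, and appending the peeled block gives the full sequence for $U_{\vb*{\Theta}\vb*{\Phi}\lambda}=V_{FG}$.

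\textbf{Main obstacle.} Everything else is bookkeeping; the only substantive step is showing that the consistency relation forced by condition (ii) is exactly what makes $\tan^2\theta_{d_+}\ge0$ and makes the two coefficient-killing equations select a common $(\theta_{d_+},\phi_{d_+})$, together with a clean enumeration of the degenerate coefficient patterns so that the reduced instance always exists.
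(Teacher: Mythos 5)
Your proof is correct and follows essentially the same strategy as the paper: necessity by unitarity and by counting $W_0/W_1$ insertions, sufficiency by carving off the rightmost $W R$ block with $\theta,\phi$ fixed by killing the extreme coefficients, the compatibility of the two coefficient equations being exactly the top-degree relation $f_{d_+}\overline{f_{-d_-}}+g_{d_+}\overline{g_{-d_-}}=0$ extracted from condition (ii). The only (cosmetic) difference is that the paper first multiplies by $w^{d_-}$ to reduce to the one-sided case $d_-=0$ and then peels only $W_0 R$ blocks (Sec.~\ref{subsec:decomposition_gqsp} and Appendix~\ref{app:angle_finding_gqsp}), whereas you work directly with the two-sided Laurent polynomial and peel $W_0 R$ or $W_1 R$ depending on whether $d_+\ge 1$.
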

\begin{proof}
The ``if" part of the theorem is proved by explicitly constructing an algorithm that is introduced in Sec.~\ref{sec:angle_finding_gqsp}.
The ``only if" part for the condition (i) follows from the fact that, by definition of $U_{\vb*{\Theta}, \vb*{\Phi},\lambda}$ that involves $d_+$ applications of $W_0$ and $d_-$ applications of $W_1$, the power for $w$ is restricted to $w^{-d_-}, ..., w^{d_+}.$ The ``only if" part for the condition (ii) follows from the unitarity of $U_{\vb*{\Theta} \vb*{\Phi} \lambda}$, that requires $\forall w\in U(1), U_{\vb*{\Theta} \vb*{\Phi} \lambda}U^\dag_{\vb*{\Theta} \vb*{\Phi} \lambda}=I$.
\end{proof}

\begin{theorem}
\label{thm:GQSP_f_cond}
Let $f(w)\coloneqq f(\theta = \arg w)$ be a Laurent polynomial with complex coefficients. Then, $\forall d\in \mathbb{N}, \exists \vb*{\Phi}\in \mathbb{R}^{d+1}$ s.t. $F = f$, \black{$U_{\vb*{\Theta} \vb*{\Phi} \lambda}(w) = V_{F G}(w)$ for $w \in U(1)$} if and only if
\begin{enumerate}[label=(\roman*)]
\item
$d_{{\rm max}, f} \leq d_+$ and $d_{{\rm min}, f} \geq -d_-$.
\item
$\forall w \in U(1), \abs{f(w)} \leq 1$.
\end{enumerate}
\end{theorem}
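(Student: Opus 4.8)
The plan is to reduce Theorem~\ref{thm:GQSP_f_cond} to Theorem~\ref{thm:GQSP_FG_cond} by producing a suitable complementary Laurent polynomial $G$ via a degree-controlled Fej\'er--Riesz factorization.

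For the ``only if'' direction I would just unwind the definitions: if phase factors exist with $U_{\vb*{\Theta} \vb*{\Phi} \lambda}(w) = V_{F G}(w)$ and $F = f$, then condition~(i) of Theorem~\ref{thm:GQSP_FG_cond} forces $d_{{\rm max}, f} = d_{{\rm max}, F} \le d_+$ and $d_{{\rm min}, f} = d_{{\rm min}, F} \ge -d_-$, which is condition~(i) here, while condition~(ii) of Theorem~\ref{thm:GQSP_FG_cond} gives $\abs{f(w)}^2 + \abs{G(w)}^2 = 1$ on $U(1)$, hence $\abs{f(w)} \le 1$, which is condition~(ii) here.

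For the ``if'' direction --- the substantive one --- I would start from $f$ satisfying (i) and (ii) and set $h(w) \coloneqq 1 - f(w) f^*(w^{-1})$, where $f^*$ denotes the Laurent polynomial obtained from $f$ by complex-conjugating its coefficients, so that $h(w) = 1 - \abs{f(w)}^2$ on $U(1)$. Three observations make $h$ eligible for Fej\'er--Riesz: (a) $h$ is invariant under $w \mapsto w^{-1}$ together with coefficient conjugation, hence real on $U(1)$; (b) $h \ge 0$ on $U(1)$ by condition~(ii); (c) by condition~(i), $f(w) f^*(w^{-1})$ is supported on the powers $w^{-n}, \dots, w^{n}$ with $n \coloneqq d_+ + d_-$, so $h$ is a trigonometric polynomial of degree at most $n$. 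The Fej\'er--Riesz theorem then yields an ordinary polynomial $q$ with $\deg q \le n$ such that $h(w) = \abs{q(w)}^2$ on $U(1)$, and I would put $G(w) \coloneqq w^{-d_-} q(w) \in \LaurentC{w}$. By construction $G$ is supported on $w^{-d_-}, \dots, w^{-d_- + \deg q}$, so $d_{{\rm max}, G} \le d_+$ and $d_{{\rm min}, G} \ge -d_-$; and on $U(1)$ one has $\abs{F(w)}^2 + \abs{G(w)}^2 = \abs{f(w)}^2 + h(w) = 1$ with $F \coloneqq f$. Hence $F, G$ meet conditions~(i) and~(ii) of Theorem~\ref{thm:GQSP_FG_cond}, and applying that theorem produces the desired $\vb*{\Theta}, \vb*{\Phi}, \lambda$ with $U_{\vb*{\Theta} \vb*{\Phi} \lambda}(w) = V_{F G}(w)$ and $F = f$.

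The only nonroutine ingredient is the Fej\'er--Riesz step together with the bookkeeping that the shift by $w^{-d_-}$ places $G$ exactly in the admissible window of powers $[-d_-, d_+]$; the rest is mechanical, including the degenerate case $h \equiv 0$, which is handled by $q \equiv 0$. I would also flag the structural contrast with Theorem~\ref{thm:Wz_QSP_f_cond}: no parity condition on $f$ survives here, because the sequence in Eq.~\eqref{eq:gqsp_sequence} uses the two distinct signal operators $W_0$ and $W_1$ and can therefore realize every monomial from $w^{-d_-}$ to $w^{d_+}$ independently. Finally, since an explicit and numerically stable construction of $q$ (by root finding on $h$, or by Prony's method) is carried out in Sec.~\ref{sec:angle_finding_gqsp}, that construction doubles as a constructive proof of the ``if'' direction, paralleling how the ``if'' part of Theorem~\ref{thm:GQSP_FG_cond} is established.
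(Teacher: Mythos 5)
Your proof is correct, and its substance coincides with the paper's. The reduction of the ``only if'' direction to Theorem~\ref{thm:GQSP_FG_cond} is exactly the paper's argument. For the ``if'' direction, you invoke the Fej\'er--Riesz theorem to factor $h(w) = 1 - f(w)\,f^*(w^{-1})$ as $\abs{q(w)}^2$ with $\deg q \le d_+ + d_-$ and then shift by $w^{-d_-}$ to land $G$ in the admissible degree window $[-d_-, d_+]$; the paper instead builds this factorization by hand via root finding on $h$, pairing each root $\xi_j$ inside the unit circle with its reflection $1/\xi_j^*$, which is precisely one standard proof of Fej\'er--Riesz. So the two arguments are mathematically the same, differing only in the level of abstraction: you cite a named theorem, while the paper's explicit construction doubles as the numerical completion algorithm (Sec.~\ref{sec:angle_finding_gqsp}). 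One small advantage of your phrasing is that Fej\'er--Riesz requires only nonnegativity, not strict positivity, of $h$ on $U(1)$, so the boundary case $\abs{f(w)}=1$ at isolated points is handled for free; the paper's GQSP root-finding discussion explicitly assumes $\abs{f(w)}<1$ and relegates the even-multiplicity argument for on-circle roots to the $(W_z,S_x)$-QSP appendix, relying on the partition's rescaling by $\alpha$ to dodge the edge case in practice.
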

\begin{proof}
    The ``if" part is proved by explicitly constructing an angle finding algorithm in Sec.~\ref{sec:angle_finding_gqsp}. The ``only if" part follows in similar as in Theorem~\ref{thm:GQSP_FG_cond}.
\end{proof}

\section{Angle Finding for Generalized QSP} \label{sec:angle_finding_gqsp}
For the extended GQSP, we prove Theorems~\ref{thm:GQSP_FG_cond},~\ref{thm:GQSP_f_cond} by explicitly constructing an angle finding algorithm.
Here, we follow the overall structure of the ordinary QSP as in Fig.~\ref{fig:flow_chart} and explicitly provide procedures to perform \black{truncation, partition, and angle finding}.

\subsection{Truncation and Partition}\label{subsec:gqsp_truncation}
The truncation is performed so that a given complex function $f$ is approximated by a complex-coefficient Laurent polynomial $\tilde{f}$ with $d_{{\rm max}, \tilde{f}} \leq d_+$ and  $d_{{\rm min}, \tilde{f}} \geq -d_-.$
In similar to the ordinary QSP, some target functions allow analytical construction (e.g. the Hamiltonian simulation as is discussed in Sec.~\ref{sec:numerics}), while we may employ the numerical technique called the Remez exchange algorithm in general~\cite{remez1934, cheney2009course, Dong_2021}.
Subsequently, we perform the partition as
\begin{equation}
    \tilde{f}(w)=\alpha f_1(w),
\end{equation}
where
 $\alpha>0$ is taken sufficiently large so that $\forall w\in U(1), \abs{f_1(w)}<1.$
 Similarly to the case with the ordinary QSP, now the problem is reduced to computing the phase factors for a given function $f_1$, which satisfy the conditions given by Theorem~\ref{thm:GQSP_f_cond}.

\subsection{Computation of phase angles}
\black{Now we proceed to describe the computation of phase angles. 
For direct methods, as is the case for the ordinary QSP, the computation consists of the completion and decomposition.
Let us briefly describe the root finding and Prony's method for completion, and then discuss the decomposition.}

\subsubsection{Completion via root finding}
We describe how to determine the polynomials $F, G$ that meet the conditions of Theorem \ref{thm:GQSP_FG_cond}, assuming that a function $f$ satisfying the conditions in Theorem \ref{thm:GQSP_f_cond} is provided. In the following, we take
$F(w) \coloneqq f(w) \in \LaurentC{w}$ and assume that for all $w \in U(1), \abs{F(w)} = \abs{f(w)} < 1$.

First,  if $g(w) \coloneqq 1 - F(w) F^* \qty(w^{-1}) \in \LaurentC{w}$ has a set of $2 (d_- + d_+)$ roots denoted by $\qty{\xi_j}$, then with a constant $C_1 \in \C$, $g(w)$ can be expressed as
\begin{equation}
g(w) = C_1 w^{- (d_- + d_+)} \prod_{j = 1}^{2 (d_- + d_+)} \qty(w - \xi_j).
\end{equation}
Since $\forall w \in U(1), \abs{F(w)} \neq 1$, it follows that $\xi_j \notin U(1)$.
Note that $\abs{F(1 / w^*)} = \abs{F(w)}$ implies $g(1 / w^*) = g(w)$ for $w \in U(1)$. By the identity theorem, it follows that $g(1 / w^*) = g(w)$ also holds for $w \in \C \setminus \qty{0}$.
Therefore, with a constant $C_2 \in \C$, $g(w)$ can be written as
\begin{eqnarray}
g(w)
&=& C_1 w^{- (d_- + d_+)} \prod_{\abs{\xi_j} < 1} \qty(w - \xi_j) \qty(w - \frac{1}{\xi_j^*}) \nonumber \\
&=& C_2 \prod_{\abs{\xi_j} < 1} \qty(w - \xi_j) \prod_{\abs{\xi_j} < 1} \qty(w^{-1} - \xi_j^*) \nonumber \\
&=& C_2 \abs{\prod_{\abs{\xi_j} < 1} \qty(w - \xi_j)}^2.
\end{eqnarray}
Since $g(w) > 0$ for $w \in U(1)$, it is clear that $C_2 > 0$.

Thus, by defining $G(w) \coloneqq w^{- d_-} \sqrt{C_2} \prod_{\abs{\xi_j} < 1} \qty(w - \xi_j)$, $F$ and $G$ satisfy the conditions of Theorem \ref{thm:GQSP_FG_cond}.
In other words, by finding all the roots of $w^{(d_- + d_+)} \qty(1 - F(w) F^* \qty(w^{-1}))$ and comparing the value at $w = 1$ to determine $C_2$, one can compute the polynomials $F$ and  $G$.

\subsubsection{Completion via Prony's method}\label{subsubsec:prony_gqsp}

Similar to the Completion using Prony's method for $(W_z, S_x)$-QSP, $\prod_{\abs{\xi_j} < 1} \qty(w - \xi_j)$ is determined from $g(w) \coloneqq 1 - F(w) F^* \qty(w^{-1}) \in \LaurentC{w}$.

First, choose $N_s$ that is sufficiently larger than $d_- + d_+$, and 
define $t_n = \exp(i \frac{2 \pi n}{N_s})$. After calculating the Fourier coefficients $\qty{\hat{h}_k}$ of \black{$h(w)\coloneqq (g(w))^{-1}$} using $\qty{h(t_n)}$, find a vector $(m_0, \ldots, m_{d_- + d_+})$ that satisfies the following equation:
\begin{equation}
\mqty(
\hat{h}_{- 1} & \hat{h}_{- 2} & \cdots & \hat{h}_{- (d_- + d_+ + 1)} \\
\hat{h}_{- 2} & \hat{h}_{- 3} & \cdots & \hat{h}_{- (d_- + d_+ + 2)} \\
\vdots & \vdots & \ddots & \vdots \\
\hat{h}_{- l} & \hat{h}_{- (l + 1)} & \cdots & \hat{h}_{- (d_- + d_+ + l)}
)
\mqty(m_0 \\ \vdots \\ m_{d_- + d_+})
= 0,
\end{equation}
where $l$ is chosen such that $l \geq d_- + d_+ + 1$.
Then, one can see that
$G(w) \coloneqq \sqrt{C_2} \sum_{k = 0}^{d_- + d_+} m_k w^{k - d_-}$ ensures that $F$ and  $G$ satisfy the conditions of Theorem~\ref{thm:GQSP_FG_cond}.

In practice, we desire to
obtain $\qty(m_0, \ldots, m_{d_- + d_+})$ in a numerically stable way. For this purpose, instead of finding a QSP operation sequence for a function $f$ that satisfies Theorem \ref{thm:GQSP_f_cond}, we consider QSP operation sequences given by functions $f_1, f_2$ defined as:
\begin{align}
f_1(w) &\coloneqq \gamma \qty(w^{d_+} + w^{- d_-}), \\
f_2(w) &\coloneqq \frac{1}{\beta} f(w) - f_1(w),
\end{align}
where $\beta, \gamma > 0$ are chosen sufficiently large to ensure $\forall w \in U(1), \abs{f_i}<1~(i \in \{1,2\})$.
Since $f = \beta (f_1 + f_2)$, the function $f$ can be represented by a linear combination of QSP operation sequences for functions $f_1, f_2$.
By ensuring the highest and lowest degree coefficients of $f_2$ to be relatively large, we can prevent rank deficiency in the matrix used in Prony's method, which leads to the numerical stability.

\subsubsection{Decomposition}\label{subsec:decomposition_gqsp}

We note that we employ a simple trick to carry out the latter procedure  in parallel to that employed in Ref.~\cite{motlagh2023generalized}.
Namely, we multiply $w^{d_-}$ to the both sides of Eq.~\eqref{eq:gqsp_sequence} so that 
\begin{align}
w^{d_-} U_{\vb*{\Theta} \vb*{\Phi} \lambda}(w)
&= 
R(\theta_{- d_-}, \phi_{- d_-}, \lambda) 
\cdot \qty(\prod_{k = - d_- + 1}^0 W_0(w) R(\theta_k, \phi_k, 0)) 
\cdot \qty(\prod_{k = 1}^{d_+} W_0(w) R(\theta_k, \phi_k, 0)), \\
w^{d_-} V_{F G}(w)
&= \mqty(w^{d_-} F(w) & i w^{d_-} G(w) \\ \cdot & \cdot).
\end{align}
Note that the lowest power of $w$ in $w^{d_-} F$ and $w^{d_-} G$ are no less than zero. We find that the carving algorithm for decomposition can be straightforwardly applied when $d_- = 0$.
\black{For further details, we guide the readers to Appendix~\ref{app:angle_finding_gqsp}.}

\section{Numerical Demonstration}\label{sec:numerics}
Here, we perform a numerical benchmark to verify the numerical stability of the proposed algorithms.
In particular, we consider the task of Hamiltonian simulation, which can be understood as constructing QSP sequence for $f(x) = e^{-i \tau x}.$

\subsection{Hamiltonian simulation}\label{subsec:ham_sim}
The truncation (and partition) can be done analytically by considering the Jacobi-Anger expansion of $f(x)$ as
\begin{equation}
e^{- i \tau x}
=
J_0(\tau)
+ 2 \sum_{k \colon \textrm{even}} (- 1)^{k / 2} J_k(\tau) T_k(x) \\
+ 2 i \sum_{k \colon \textrm{odd}} (- 1)^{(k - 1) / 2} J_k(\tau) T_k(x),
\end{equation}
where $J_k$ is the Bessel function of the first kind and $T_k$ is the Chebyshev polynomial of the first kind.
By rewriting the expression using $w \coloneqq e^{i \arccos x} \in U(1)$, we obtain
\begin{equation}
e^{- i \tau x}
=
J_0(\tau)
+ \sum_{k \colon \textrm{even}} (- 1)^{k / 2} J_k(\tau) \qty(w^{k} + w^{- k}) \\
+ i \sum_{k \colon \textrm{odd}} (- 1)^{(k - 1) / 2} J_k(\tau) \qty(w^{k} + w^{- k}).
\end{equation}

\black{
In the truncation part, at even order $d$, equation (24) or equation (25) is truncated as follows:
\begin{align}
e^{- i \tau x}
&=
J_0(\tau)
+ 2 \sum_{k \colon \textrm{even}, \abs{k} \leq d} (- 1)^{k / 2} J_k(\tau) T_k(x)
+ 2 i \sum_{k \colon \textrm{odd}, \abs{k} \leq d - 1} (- 1)^{(k - 1) / 2} J_k(\tau) T_k(x), \\
e^{- i \tau x}
&=
J_0(\tau)
+ \sum_{k \colon \textrm{even}, \abs{k} \leq d} (- 1)^{k / 2} J_k(\tau) \qty(w^{k} + w^{- k})
+ i \sum_{k \colon \textrm{odd}, \abs{k} \leq d - 1} (- 1)^{(k - 1) / 2} J_k(\tau) \qty(w^{k} + w^{- k}).
\end{align}
In particular, in GQSP, we set $d = d^+ = d^-$.
}

We comment on the implementation cost in terms of the number of calls to the controlled signal operators.
By noting that the partition operation divides $e^{-i \tau x}$ into two functions,
\black{
\begin{align}
f_1(w)
&=
J_0(\tau)
+ \sum_{k \colon \textrm{even}, \abs{k} \leq d} (- 1)^{k / 2} J_k(\tau) \qty(w^{k} + w^{- k}), \\
f_2(w)
&=
i \sum_{k \colon \textrm{odd}, \abs{k} \leq d - 1} (- 1)^{(k - 1) / 2} J_k(\tau) \qty(w^{k} + w^{- k}),
\end{align}
}
most algorithms based on the ordinary QSP (namely $(W_x, S_z)$-QSP and $(W_x, S_z)$-QSP) require \black{$2d + (2d - 2) = 4d - 2$} calls to controlled-$U$ or controlled-$U^\dagger$ in order to realize the truncation order of $d$.
This is \black{$2(4d - 2) = 8d - 4$} in algorithms that employ Prony's method or gradient-based optimization, since one must extract the real part of $U_{\vb*{\Phi}}$ \black{by constructing the linear combination $(U_{\vb*{\Phi}} + U_{\vb*{\Phi}}^\dag) / 2$}.
On the other hand, these numbers are approximately halved when we employ GQSP; the root-finding-based algorithm requires $2d$ calls, whereas the direct method with Prony's method requires $4d$ calls \black{, depending on how many functions are divided in the partition part}.
\black{Note that this is due to the fact that we are able to constitute complex functions and thus the number of partitions are reduced accordingly.}
The optimization-based algorithm also requires $2d$ calls for GQSP.
\black{
Although the cost of amplitude amplification is not included here, it should be noted that the scaling factor is taken as $1/2$ for all methods in the following numberical demonstration, so the difference in cost and accuracy due to amplitude amplification among the methods can be considered negligible.
}

\subsection{Results}\label{subsec:results}

In order to demonstrate the accuracy of angle finding algorithms, we compute the implementation error of Hamiltonian simulation of $\tau=10,$ \black{$30$, and $100$}.
For the sake of benchmark, here we define the implementation error as 
\begin{eqnarray}
    \epsilon \coloneqq \|\tilde{f}_{\rm QSP} - f\|_{\infty},
\end{eqnarray}
where $\tilde{f}_{\rm QSP}$ is the approximation of $e^{- i \tau x}$ obtained from each angle finding algorithms.

Figure~\ref{fig:ham_sim_final_error}(a) shows the total error under various truncation order $d$ for $\tau=10$. As is consistent with previous works, the error at small $d$ is dominated by the truncation of the target function, while at larger $d$ the angle finding algorithms themselves are the dominant source of the error. For instance, we can see from Fig.~\ref{fig:ham_sim_final_error}(a) that the direct methods for the ordinary QSP that relies on root finding all suffers to achieve total error below $\epsilon=10^{-9}$ due to errors in completion and decomposition; one must rely on either the Prony's method or the gradient-based optimizations to achieve accuracy beyond $10^{-10}$, both of which requires $8d-4$ calls to controlled signal operator.

We find that our modified formalism of GQSP allows us to achieve the total error of $10^{-13}$ with approximately halved query complexity \black{(see Figs.~\ref{fig:ham_sim_final_error} and~\ref{fig:ham_sim_final_error2} for results of $\tau=10$ and $\tau=30, 100,$ respectively)}. 
This is done by employing the Prony's method tailored for GQSP (Sec.~\ref{subsubsec:prony_gqsp}), which requires $4d$ calls to controlled signal operator to implement the target function with truncation order of $d$. We observe that, at low or intermediate-precision regime, one may possibly benefit by employing the conventional root finding scheme, while it is an open problem how stable this approach is in general.
It is crucial to mention the poor performance of the optimization-based method for GQSP, as opposed to the stable performance for the ordinary QSP.
Here we find that the reflection symmetry of Pauli rotation in the ordinary QSP is crucial; Ref.~\cite{Dong_2021} explicitly utilized the symmetry in phase factors $\vb*{\Phi}$ to assure the convergence to the global optimum.
\black{
In the case where the symmetry exists, starting from a special symmetric initial guess
\begin{equation}
\Phi^0 = \qty(\frac{\pi}{4}, 0, 0, \ldots, 0, 0, \frac{\pi}{4}),
\end{equation}
using standard unconstrained optimization algorithms, one can find one global minima with positive definite Hessians in the neighborhood of $\Phi^0$ \cite{Wang2022energylandscapeof}.
However, such a symmetry is in general absent in GQSP so that similar technique cannot be adopted.
It can be inferred from the G.O. graph in Figure~\ref{fig:ham_sim_final_error}(b) that, with a random initial guess, the likelihood of finding a low-error solution is nearly zero.
}

Finally, we remark on the runtime of the algorithms.
We expect that the preprecessing time for the GQSP is affordable, since the classical runtime for the root finding algorithms are only increased by constant factors compared to those of QSP.
For instance, Fig.~\ref{fig:ham_sim_time} shows that the runtime scales polynomially (linear to quadratic) with $d$ for both the ordinary QSP and GQSP, and the difference in the prefactor is less than a factor of ten. We observe that these scaling are better than the formal one: direct methods in general runs with cubic scaling with $d$, while in practice we observe linear dependence for $d$ up to a few hundreds. It is possible that the bottleneck of cubic scaling dominates the runtime at larger values of $d$.
\black{We remark that these results are not inconsistent with Ref.~\cite{motlagh2023generalized} which discuss the runtime to perform the completion step for random polynomials, not the entire angle finding procedure for practical QSP tasks such as Hamiltonian simulation. We have also found that the accuracy of completion by the FFT-based convolution mentioned in Ref.~\cite{motlagh2023generalized} is poor so that we cannot employ for the task studied in this work.}

\begin{figure}[tb]
    \begin{center}
        \includegraphics[width=0.9\linewidth]{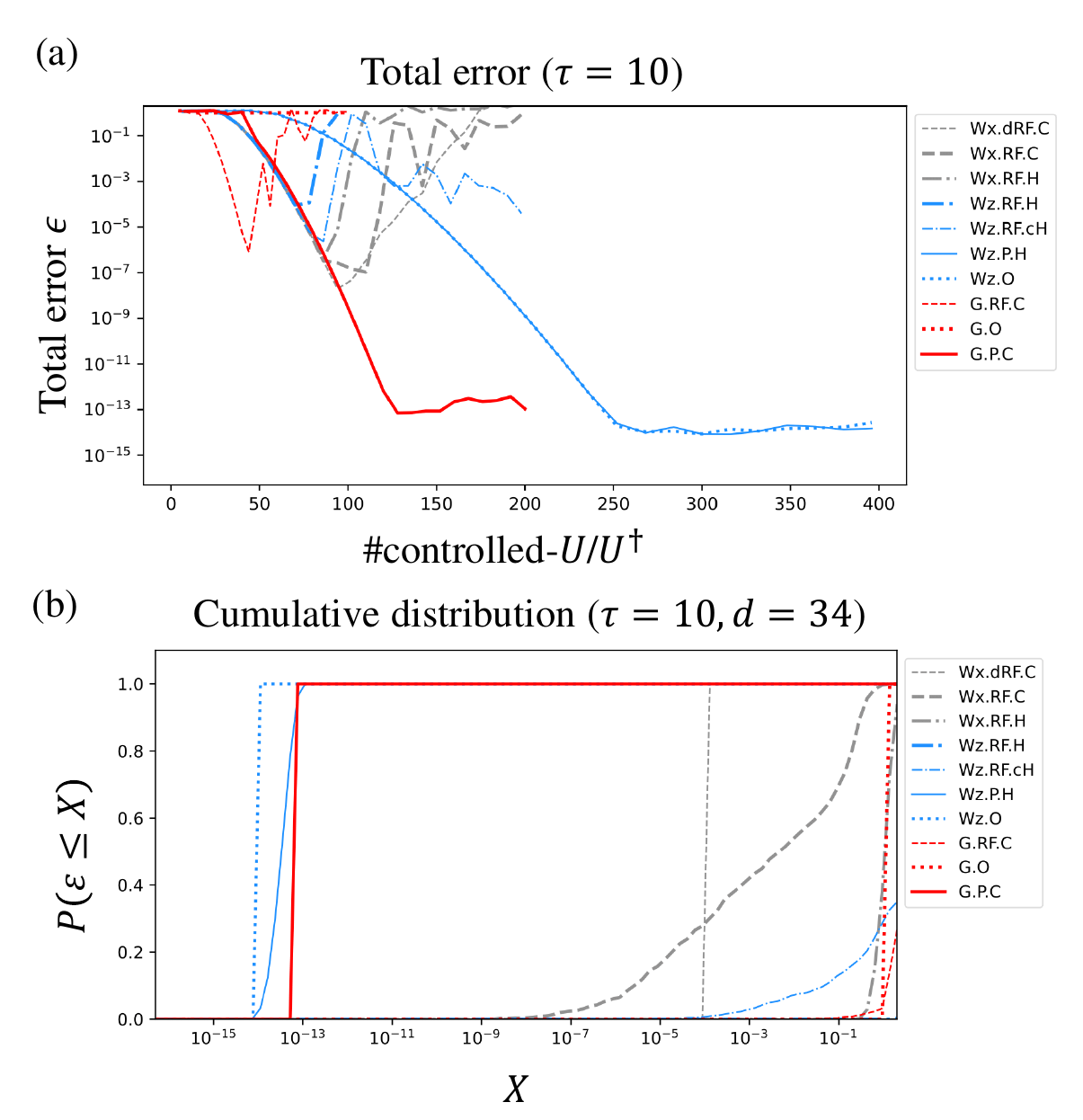}
        \caption{Errors in angle finding of QSP and GQSP for Hamiltonian simulation of $\tau=10$. Here, we compare the results between various choice of QSP bases with direct/optimization methods. (a) Scaling of the total error with respect to the number of controlled signal operators. We take  the best result among 10 trials, while the median values exhibit similar behavior. Here we take $d_+=d_-$ for GQSP. (b) Cumulative distribution of the total error $\epsilon$ for $d=d_+=d_-=34$ over 1000 trials.
        For both panels, {\bf Wx}, {\bf Wz}, {\bf G} denotes the $(W_x, S_z)$-QSP, $(W_z, S_x)$-QSP, and GQSP modified from Ref.~\cite{motlagh2023generalized} as Eq.~\eqref{eq:gqsp_sequence}. For the direct methods, we denote the completion step by {\bf RF/dRF} the root finding with random/deterministic protocol, and by {\bf P} the Prony's method. The decomposition steps are indicated as {\bf C}, {\bf H}, and {\bf cH} for the carving, halving, and halving with captalization using $\epsilon_{\rm cap}=10^{-8}$. The optimization-based result is denoted by {\bf O}.}
        \label{fig:ham_sim_final_error}
    \end{center}
\end{figure}

\begin{figure*}[tb]
    \begin{center}        \includegraphics[width=0.9\linewidth]{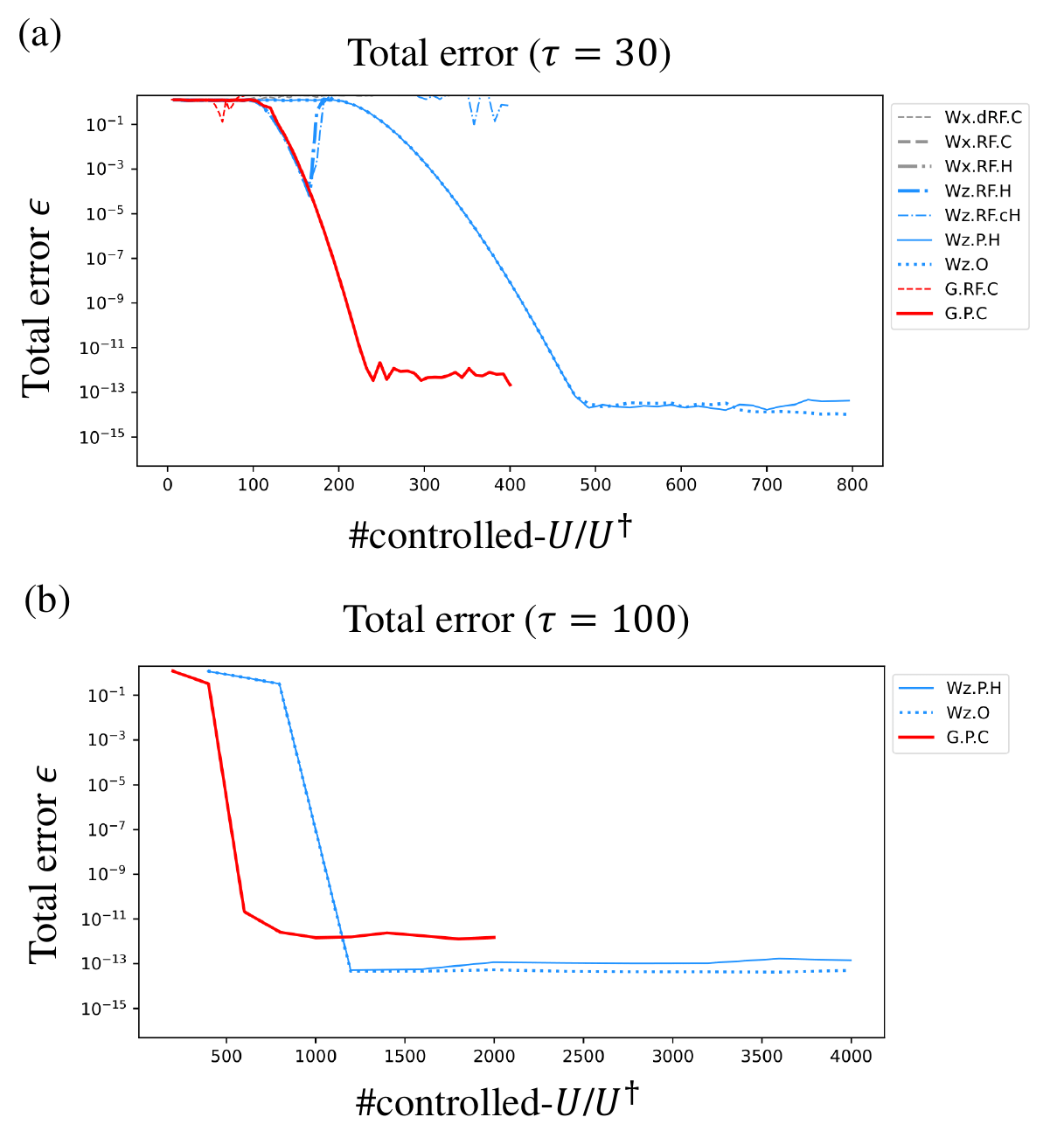}
        \caption{Errors in angle finding of QSP and GQSP for Hamiltonian simulation of (a) $\tau=30$, (b) $\tau=100$.
        For the methods not shown in the graph, either no correct solution was obtained or they failed to complete within the allotted time.
        }
        \label{fig:ham_sim_final_error2}
    \end{center}
\end{figure*}

\begin{figure*}[tb]
    \begin{center}        \includegraphics[width=0.9\linewidth]{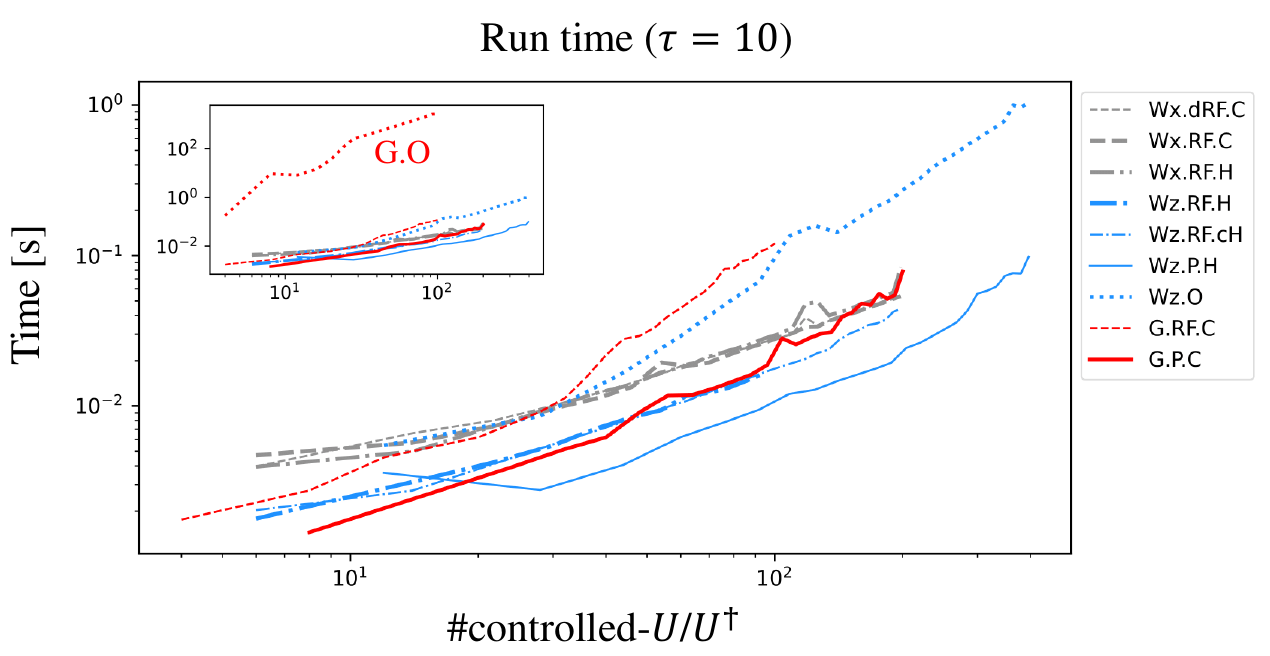}
        \caption{Runtime scaling of angle finding algorithms. Here, $d_+=d_-$ is assumed for GQSP. The notation of legends follow that of Fig.~\ref{fig:ham_sim_final_error}.
        The inset includes the data from optimization-based method for GQSP ({\bf G.O}), which is excluded in the main figure for better visibility. 
        }
        \label{fig:ham_sim_time}
    \end{center}
\end{figure*}

\section{Discussion}
In this work, we have extended the framework of generalized quantum signal processing (GQSP) proposed in Ref.~\cite{motlagh2023generalized}, and also proposed a robust algorithm that computes the phase angles up to high accuracy regime that is limited by machine precision.
By performing a numerical benchmark of computing phase sequence for  Hamiltonian simulation, we find that our scheme essentially halves the queries to signal operators compared to the ordinary QSP that can be readily applied at precision of $10^{-13}.$

As a future perspective, we envision that it is practically important to reveal how the algorithm performs under the presence of noise.
It has been pointed out that one may construct a systematic procedure to suppress the overrotation errors in the ordinary QSP~\cite{tan2023error, siegel2023algorithmic}, while it is nontrivial whether similar technique can be applied more efficiently for GQSP.
Also, it is interesting to seek for other advantages that can be obtained by utilizing the general form of QSP.

\section*{Acknowledgement}
The authors wishes to thank Danial Motlagh for sharing the code for angle finding employed in the first draft of Ref.~\cite{motlagh2023generalized}.
N.Y. wishes to thank
JST PRESTO No. JPMJPR2119, 
JST Grant Number JPMJPF2221, 
JST ERATO Grant Number  JPMJER2302, 
JST CREST Grant Number JPMJCR23I4,  
and the support from IBM Quantum.

\black{Numerical experiments are conducted using Python 3.11.1 and a laptop powered by AMD Ryzen 7 PRO 5850U CPU with 16 GB RAM.}
The codes are available via GitHub repository~\cite{github2024}.

\emph{Note added.---}
During the completion of our manuscript, we became aware
of an independent work by Berry {\it et al.}~\cite{berry2024doubling} that also pointed out that modification to the original form of generalized QSP in Ref.~\cite{motlagh2023generalized} allows one to reduce the cost of Hamiltonian simulation by a factor of 2.
\black{Also, we noticed that the original formalism of GQSP proposed by Motlagh and Wiebe in August 2023 has been extended by the revision done on late January 2024, which presents a formalism  equivalent to what we present in our work. The numerical demonstrations are focused on the runtime of completion procedure for random polynomials.}

\clearpage
\bibliography{bib.bib}

\begin{thebibliography}{27}%
\makeatletter
\providecommand \@ifxundefined [1]{%
 \@ifx{#1\undefined}
}%
\providecommand \@ifnum [1]{%
 \ifnum #1\expandafter \@firstoftwo
 \else \expandafter \@secondoftwo
 \fi
}%
\providecommand \@ifx [1]{%
 \ifx #1\expandafter \@firstoftwo
 \else \expandafter \@secondoftwo
 \fi
}%
\providecommand \natexlab [1]{#1}%
\providecommand \enquote  [1]{``#1''}%
\providecommand \bibnamefont  [1]{#1}%
\providecommand \bibfnamefont [1]{#1}%
\providecommand \citenamefont [1]{#1}%
\providecommand \href@noop [0]{\@secondoftwo}%
\providecommand \href [0]{\begingroup \@sanitize@url \@href}%
\providecommand \@href[1]{\@@startlink{#1}\@@href}%
\providecommand \@@href[1]{\endgroup#1\@@endlink}%
\providecommand \@sanitize@url [0]{\catcode `\\12\catcode `\$12\catcode
  `\&12\catcode `\#12\catcode `\^12\catcode `\_12\catcode `\%12\relax}%
\providecommand \@@startlink[1]{}%
\providecommand \@@endlink[0]{}%
\providecommand \url  [0]{\begingroup\@sanitize@url \@url }%
\providecommand \@url [1]{\endgroup\@href {#1}{\urlprefix }}%
\providecommand \urlprefix  [0]{URL }%
\providecommand \Eprint [0]{\href }%
\providecommand \doibase [0]{http://dx.doi.org/}%
\providecommand \selectlanguage [0]{\@gobble}%
\providecommand \bibinfo  [0]{\@secondoftwo}%
\providecommand \bibfield  [0]{\@secondoftwo}%
\providecommand \translation [1]{[#1]}%
\providecommand \BibitemOpen [0]{}%
\providecommand \bibitemStop [0]{}%
\providecommand \bibitemNoStop [0]{.\EOS\space}%
\providecommand \EOS [0]{\spacefactor3000\relax}%
\providecommand \BibitemShut  [1]{\csname bibitem#1\endcsname}%
\let\auto@bib@innerbib\@empty
\bibitem [{\citenamefont {Low}\ \emph {et~al.}(2016)\citenamefont {Low},
  \citenamefont {Yoder},\ and\ \citenamefont {Chuang}}]{low2016methodology}%
  \BibitemOpen
  \bibfield  {author} {\bibinfo {author} {\bibfnamefont {Guang~Hao}\
  \bibnamefont {Low}}, \bibinfo {author} {\bibfnamefont {Theodore~J.}\
  \bibnamefont {Yoder}}, \ and\ \bibinfo {author} {\bibfnamefont {Isaac~L.}\
  \bibnamefont {Chuang}},\ }\bibfield  {title} {\enquote {\bibinfo {title}
  {Methodology of resonant equiangular composite quantum gates},}\ }\href
  {\doibase 10.1103/PhysRevX.6.041067} {\bibfield  {journal} {\bibinfo
  {journal} {Phys. Rev. X}\ }\textbf {\bibinfo {volume} {6}},\ \bibinfo {pages}
  {041067} (\bibinfo {year} {2016})}\BibitemShut {NoStop}%
\bibitem [{\citenamefont {Low}\ and\ \citenamefont
  {Chuang}(2017)}]{low2017optimal}%
  \BibitemOpen
  \bibfield  {author} {\bibinfo {author} {\bibfnamefont {Guang~Hao}\
  \bibnamefont {Low}}\ and\ \bibinfo {author} {\bibfnamefont {Isaac~L.}\
  \bibnamefont {Chuang}},\ }\bibfield  {title} {\enquote {\bibinfo {title}
  {Optimal hamiltonian simulation by quantum signal processing},}\ }\href
  {\doibase 10.1103/PhysRevLett.118.010501} {\bibfield  {journal} {\bibinfo
  {journal} {Phys. Rev. Lett.}\ }\textbf {\bibinfo {volume} {118}},\ \bibinfo
  {pages} {010501} (\bibinfo {year} {2017})}\BibitemShut {NoStop}%
\bibitem [{\citenamefont {Low}\ and\ \citenamefont
  {Chuang}(2019)}]{Low2019hamiltonian}%
  \BibitemOpen
  \bibfield  {author} {\bibinfo {author} {\bibfnamefont {Guang~Hao}\
  \bibnamefont {Low}}\ and\ \bibinfo {author} {\bibfnamefont {Isaac~L.}\
  \bibnamefont {Chuang}},\ }\bibfield  {title} {\enquote {\bibinfo {title}
  {Hamiltonian {S}imulation by {Q}ubitization},}\ }\href {\doibase
  10.22331/q-2019-07-12-163} {\bibfield  {journal} {\bibinfo  {journal}
  {{Quantum}}\ }\textbf {\bibinfo {volume} {3}},\ \bibinfo {pages} {163}
  (\bibinfo {year} {2019})}\BibitemShut {NoStop}%
\bibitem [{\citenamefont {Gily^^c3^^a9n}\ \emph {et~al.}(2019)\citenamefont
  {Gily^^c3^^a9n}, \citenamefont {Su}, \citenamefont {Low},\ and\ \citenamefont
  {Wiebe}}]{Gily_n_2019}%
  \BibitemOpen
  \bibfield  {author} {\bibinfo {author} {\bibfnamefont {Andr^^c3^^a1s}\
  \bibnamefont {Gily^^c3^^a9n}}, \bibinfo {author} {\bibfnamefont {Yuan}\
  \bibnamefont {Su}}, \bibinfo {author} {\bibfnamefont {Guang~Hao}\
  \bibnamefont {Low}}, \ and\ \bibinfo {author} {\bibfnamefont {Nathan}\
  \bibnamefont {Wiebe}},\ }\bibfield  {title} {\enquote {\bibinfo {title}
  {Quantum singular value transformation and beyond: exponential improvements
  for quantum matrix arithmetics},}\ }in\ \href {\doibase
  10.1145/3313276.3316366} {\emph {\bibinfo {booktitle} {Proceedings of the
  51st Annual ACM SIGACT Symposium on Theory of Computing}}},\ \bibinfo {series
  and number} {STOC ’19}\ (\bibinfo  {publisher} {ACM},\ \bibinfo {year}
  {2019})\BibitemShut {NoStop}%
\bibitem [{\citenamefont {Martyn}\ \emph {et~al.}(2021)\citenamefont {Martyn},
  \citenamefont {Rossi}, \citenamefont {Tan},\ and\ \citenamefont
  {Chuang}}]{Martyn_2021}%
  \BibitemOpen
  \bibfield  {author} {\bibinfo {author} {\bibfnamefont {John~M.}\ \bibnamefont
  {Martyn}}, \bibinfo {author} {\bibfnamefont {Zane~M.}\ \bibnamefont {Rossi}},
  \bibinfo {author} {\bibfnamefont {Andrew~K.}\ \bibnamefont {Tan}}, \ and\
  \bibinfo {author} {\bibfnamefont {Isaac~L.}\ \bibnamefont {Chuang}},\
  }\bibfield  {title} {\enquote {\bibinfo {title} {Grand unification of quantum
  algorithms},}\ }\href {\doibase 10.1103/prxquantum.2.040203} {\bibfield
  {journal} {\bibinfo  {journal} {PRX Quantum}\ }\textbf {\bibinfo {volume}
  {2}} (\bibinfo {year} {2021}),\ 10.1103/prxquantum.2.040203}\BibitemShut
  {NoStop}%
\bibitem [{\citenamefont {Mizuta}\ and\ \citenamefont
  {Fujii}(2023)}]{Mizuta2023optimalhamiltonian}%
  \BibitemOpen
  \bibfield  {author} {\bibinfo {author} {\bibfnamefont {Kaoru}\ \bibnamefont
  {Mizuta}}\ and\ \bibinfo {author} {\bibfnamefont {Keisuke}\ \bibnamefont
  {Fujii}},\ }\bibfield  {title} {\enquote {\bibinfo {title} {Optimal
  {H}amiltonian simulation for time-periodic systems},}\ }\href {\doibase
  10.22331/q-2023-03-28-962} {\bibfield  {journal} {\bibinfo  {journal}
  {{Quantum}}\ }\textbf {\bibinfo {volume} {7}},\ \bibinfo {pages} {962}
  (\bibinfo {year} {2023})}\BibitemShut {NoStop}%
\bibitem [{\citenamefont {Childs}\ \emph {et~al.}(2018)\citenamefont {Childs},
  \citenamefont {Maslov}, \citenamefont {Nam}, \citenamefont {Ross},\ and\
  \citenamefont {Su}}]{childs2018toward}%
  \BibitemOpen
  \bibfield  {author} {\bibinfo {author} {\bibfnamefont {Andrew~M}\
  \bibnamefont {Childs}}, \bibinfo {author} {\bibfnamefont {Dmitri}\
  \bibnamefont {Maslov}}, \bibinfo {author} {\bibfnamefont {Yunseong}\
  \bibnamefont {Nam}}, \bibinfo {author} {\bibfnamefont {Neil~J}\ \bibnamefont
  {Ross}}, \ and\ \bibinfo {author} {\bibfnamefont {Yuan}\ \bibnamefont {Su}},\
  }\bibfield  {title} {\enquote {\bibinfo {title} {Toward the first quantum
  simulation with quantum speedup},}\ }\href@noop {} {\bibfield  {journal}
  {\bibinfo  {journal} {Proceedings of the National Academy of Sciences}\
  }\textbf {\bibinfo {volume} {115}},\ \bibinfo {pages} {9456--9461} (\bibinfo
  {year} {2018})}\BibitemShut {NoStop}%
\bibitem [{\citenamefont {Yoshioka}\ \emph {et~al.}(2022)\citenamefont
  {Yoshioka}, \citenamefont {Okubo}, \citenamefont {Suzuki}, \citenamefont
  {Koizumi},\ and\ \citenamefont {Mizukami}}]{yoshioka2022hunting}%
  \BibitemOpen
  \bibfield  {author} {\bibinfo {author} {\bibfnamefont {Nobuyuki}\
  \bibnamefont {Yoshioka}}, \bibinfo {author} {\bibfnamefont {Tsuyoshi}\
  \bibnamefont {Okubo}}, \bibinfo {author} {\bibfnamefont {Yasunari}\
  \bibnamefont {Suzuki}}, \bibinfo {author} {\bibfnamefont {Yuki}\ \bibnamefont
  {Koizumi}}, \ and\ \bibinfo {author} {\bibfnamefont {Wataru}\ \bibnamefont
  {Mizukami}},\ }\bibfield  {title} {\enquote {\bibinfo {title} {{Hunting for
  quantum-classical crossover in condensed matter problems}},}\ }\href
  {http://arxiv.org/abs/2210.14109} {\ ,\ \bibinfo {pages} {1--44} (\bibinfo
  {year} {2022})},\ \Eprint {http://arxiv.org/abs/2210.14109}
  {arXiv:2210.14109} \BibitemShut {NoStop}%
\bibitem [{\citenamefont {Beverland}\ \emph {et~al.}(2022)\citenamefont
  {Beverland}, \citenamefont {Murali}, \citenamefont {Troyer}, \citenamefont
  {Svore}, \citenamefont {Hoeffler}, \citenamefont {Kliuchnikov}, \citenamefont
  {Low}, \citenamefont {Soeken}, \citenamefont {Sundaram},\ and\ \citenamefont
  {Vaschillo}}]{beverland2022assessing}%
  \BibitemOpen
  \bibfield  {author} {\bibinfo {author} {\bibfnamefont {Michael~E}\
  \bibnamefont {Beverland}}, \bibinfo {author} {\bibfnamefont {Prakash}\
  \bibnamefont {Murali}}, \bibinfo {author} {\bibfnamefont {Matthias}\
  \bibnamefont {Troyer}}, \bibinfo {author} {\bibfnamefont {Krysta~M}\
  \bibnamefont {Svore}}, \bibinfo {author} {\bibfnamefont {Torsten}\
  \bibnamefont {Hoeffler}}, \bibinfo {author} {\bibfnamefont {Vadym}\
  \bibnamefont {Kliuchnikov}}, \bibinfo {author} {\bibfnamefont {Guang~Hao}\
  \bibnamefont {Low}}, \bibinfo {author} {\bibfnamefont {Mathias}\ \bibnamefont
  {Soeken}}, \bibinfo {author} {\bibfnamefont {Aarthi}\ \bibnamefont
  {Sundaram}}, \ and\ \bibinfo {author} {\bibfnamefont {Alexander}\
  \bibnamefont {Vaschillo}},\ }\bibfield  {title} {\enquote {\bibinfo {title}
  {Assessing requirements to scale to practical quantum advantage},}\
  }\href@noop {} {\bibfield  {journal} {\bibinfo  {journal} {arXiv preprint
  arXiv:2211.07629}\ } (\bibinfo {year} {2022})}\BibitemShut {NoStop}%
\bibitem [{\citenamefont {Dalzell}\ \emph {et~al.}(2023)\citenamefont
  {Dalzell}, \citenamefont {McArdle}, \citenamefont {Berta}, \citenamefont
  {Bienias}, \citenamefont {Chen}, \citenamefont {Gily{\'e}n}, \citenamefont
  {Hann}, \citenamefont {Kastoryano}, \citenamefont {Khabiboulline},
  \citenamefont {Kubica} \emph {et~al.}}]{dalzell2023quantum}%
  \BibitemOpen
  \bibfield  {author} {\bibinfo {author} {\bibfnamefont {Alexander~M}\
  \bibnamefont {Dalzell}}, \bibinfo {author} {\bibfnamefont {Sam}\ \bibnamefont
  {McArdle}}, \bibinfo {author} {\bibfnamefont {Mario}\ \bibnamefont {Berta}},
  \bibinfo {author} {\bibfnamefont {Przemyslaw}\ \bibnamefont {Bienias}},
  \bibinfo {author} {\bibfnamefont {Chi-Fang}\ \bibnamefont {Chen}}, \bibinfo
  {author} {\bibfnamefont {Andr{\'a}s}\ \bibnamefont {Gily{\'e}n}}, \bibinfo
  {author} {\bibfnamefont {Connor~T}\ \bibnamefont {Hann}}, \bibinfo {author}
  {\bibfnamefont {Michael~J}\ \bibnamefont {Kastoryano}}, \bibinfo {author}
  {\bibfnamefont {Emil~T}\ \bibnamefont {Khabiboulline}}, \bibinfo {author}
  {\bibfnamefont {Aleksander}\ \bibnamefont {Kubica}},  \emph {et~al.},\
  }\bibfield  {title} {\enquote {\bibinfo {title} {Quantum algorithms: A survey
  of applications and end-to-end complexities},}\ }\href@noop {} {\bibfield
  {journal} {\bibinfo  {journal} {arXiv preprint arXiv:2310.03011}\ } (\bibinfo
  {year} {2023})}\BibitemShut {NoStop}%
\bibitem [{\citenamefont {Sakamoto}\ \emph {et~al.}(2023)\citenamefont
  {Sakamoto}, \citenamefont {Morisaki}, \citenamefont {Haruna}, \citenamefont
  {Itou}, \citenamefont {Fujii},\ and\ \citenamefont
  {Mitarai}}]{sakamoto2023end}%
  \BibitemOpen
  \bibfield  {author} {\bibinfo {author} {\bibfnamefont {Kazuki}\ \bibnamefont
  {Sakamoto}}, \bibinfo {author} {\bibfnamefont {Hayata}\ \bibnamefont
  {Morisaki}}, \bibinfo {author} {\bibfnamefont {Junichi}\ \bibnamefont
  {Haruna}}, \bibinfo {author} {\bibfnamefont {Etsuko}\ \bibnamefont {Itou}},
  \bibinfo {author} {\bibfnamefont {Keisuke}\ \bibnamefont {Fujii}}, \ and\
  \bibinfo {author} {\bibfnamefont {Kosuke}\ \bibnamefont {Mitarai}},\
  }\bibfield  {title} {\enquote {\bibinfo {title} {End-to-end complexity for
  simulating the schwinger model on quantum computers},}\ }\href@noop {}
  {\bibfield  {journal} {\bibinfo  {journal} {arXiv preprint arXiv:2311.17388}\
  } (\bibinfo {year} {2023})}\BibitemShut {NoStop}%
\bibitem [{\citenamefont {Mizuta}\ and\ \citenamefont
  {Fujii}(2024)}]{mizuta2024recursive}%
  \BibitemOpen
  \bibfield  {author} {\bibinfo {author} {\bibfnamefont {Kaoru}\ \bibnamefont
  {Mizuta}}\ and\ \bibinfo {author} {\bibfnamefont {Keisuke}\ \bibnamefont
  {Fujii}},\ }\bibfield  {title} {\enquote {\bibinfo {title} {Recursive quantum
  eigenvalue and singular-value transformation: Analytic construction of matrix
  sign function by newton iteration},}\ }\href@noop {} {\bibfield  {journal}
  {\bibinfo  {journal} {Physical Review Research}\ }\textbf {\bibinfo {volume}
  {6}},\ \bibinfo {pages} {L012007} (\bibinfo {year} {2024})}\BibitemShut
  {NoStop}%
\bibitem [{\citenamefont {Haah}(2019)}]{Haah_2019}%
  \BibitemOpen
  \bibfield  {author} {\bibinfo {author} {\bibfnamefont {Jeongwan}\
  \bibnamefont {Haah}},\ }\bibfield  {title} {\enquote {\bibinfo {title}
  {Product decomposition of periodic functions in quantum signal processing},}\
  }\href {\doibase 10.22331/q-2019-10-07-190} {\bibfield  {journal} {\bibinfo
  {journal} {Quantum}\ }\textbf {\bibinfo {volume} {3}},\ \bibinfo {pages}
  {190} (\bibinfo {year} {2019})}\BibitemShut {NoStop}%
\bibitem [{\citenamefont {Dong}\ \emph {et~al.}(2021)\citenamefont {Dong},
  \citenamefont {Meng}, \citenamefont {Whaley},\ and\ \citenamefont
  {Lin}}]{Dong_2021}%
  \BibitemOpen
  \bibfield  {author} {\bibinfo {author} {\bibfnamefont {Yulong}\ \bibnamefont
  {Dong}}, \bibinfo {author} {\bibfnamefont {Xiang}\ \bibnamefont {Meng}},
  \bibinfo {author} {\bibfnamefont {K.~Birgitta}\ \bibnamefont {Whaley}}, \
  and\ \bibinfo {author} {\bibfnamefont {Lin}\ \bibnamefont {Lin}},\ }\bibfield
   {title} {\enquote {\bibinfo {title} {Efficient phase-factor evaluation in
  quantum signal processing},}\ }\href {\doibase 10.1103/physreva.103.042419}
  {\bibfield  {journal} {\bibinfo  {journal} {Physical Review A}\ }\textbf
  {\bibinfo {volume} {103}} (\bibinfo {year} {2021}),\
  10.1103/physreva.103.042419}\BibitemShut {NoStop}%
\bibitem [{\citenamefont {Higham}(2002)}]{higham2002accuracy}%
  \BibitemOpen
  \bibfield  {author} {\bibinfo {author} {\bibfnamefont {Nicholas~J}\
  \bibnamefont {Higham}},\ }\href@noop {} {\emph {\bibinfo {title} {Accuracy
  and stability of numerical algorithms}}}\ (\bibinfo  {publisher} {SIAM},\
  \bibinfo {year} {2002})\BibitemShut {NoStop}%
\bibitem [{\citenamefont {Motlagh}\ and\ \citenamefont
  {Wiebe}(2023)}]{motlagh2023generalized}%
  \BibitemOpen
  \bibfield  {author} {\bibinfo {author} {\bibfnamefont {Danial}\ \bibnamefont
  {Motlagh}}\ and\ \bibinfo {author} {\bibfnamefont {Nathan}\ \bibnamefont
  {Wiebe}},\ }\bibfield  {title} {\enquote {\bibinfo {title} {Generalized
  quantum signal processing},}\ }\href@noop {} {\bibfield  {journal} {\bibinfo
  {journal} {arXiv preprint arXiv:2308.01501}\ } (\bibinfo {year}
  {2023})}\BibitemShut {NoStop}%
\bibitem [{\citenamefont {Chao}\ \emph {et~al.}(2020)\citenamefont {Chao},
  \citenamefont {Ding}, \citenamefont {Gilyen}, \citenamefont {Huang},\ and\
  \citenamefont {Szegedy}}]{chao2020finding}%
  \BibitemOpen
  \bibfield  {author} {\bibinfo {author} {\bibfnamefont {Rui}\ \bibnamefont
  {Chao}}, \bibinfo {author} {\bibfnamefont {Dawei}\ \bibnamefont {Ding}},
  \bibinfo {author} {\bibfnamefont {Andras}\ \bibnamefont {Gilyen}}, \bibinfo
  {author} {\bibfnamefont {Cupjin}\ \bibnamefont {Huang}}, \ and\ \bibinfo
  {author} {\bibfnamefont {Mario}\ \bibnamefont {Szegedy}},\ }\href@noop {}
  {\enquote {\bibinfo {title} {Finding angles for quantum signal processing
  with machine precision},}\ } (\bibinfo {year} {2020}),\ \Eprint
  {http://arxiv.org/abs/2003.02831} {arXiv:2003.02831 [quant-ph]} \BibitemShut
  {NoStop}%
\bibitem [{rem(1934)}]{remez1934}%
  \BibitemOpen
  \bibfield  {title} {\enquote {\bibinfo {title} {Sur le calcul effectif des
  polynomes d’approximation de tchebichef},}\ }\href@noop {} {\bibfield
  {journal} {\bibinfo  {journal} {CR Acad. Sci. Paris}\ }\textbf {\bibinfo
  {volume} {199}},\ \bibinfo {pages} {337} (\bibinfo {year}
  {1934})}\BibitemShut {NoStop}%
\bibitem [{\citenamefont {Ying}(2022)}]{Ying_2022}%
  \BibitemOpen
  \bibfield  {author} {\bibinfo {author} {\bibfnamefont {Lexing}\ \bibnamefont
  {Ying}},\ }\bibfield  {title} {\enquote {\bibinfo {title} {Stable
  factorization for phase factors of quantum signal processing},}\ }\href
  {\doibase 10.22331/q-2022-10-20-842} {\bibfield  {journal} {\bibinfo
  {journal} {Quantum}\ }\textbf {\bibinfo {volume} {6}},\ \bibinfo {pages}
  {842} (\bibinfo {year} {2022})}\BibitemShut {NoStop}%
\bibitem [{\citenamefont {Dong}\ \emph {et~al.}(2022)\citenamefont {Dong},
  \citenamefont {Lin}, \citenamefont {Ni},\ and\ \citenamefont
  {Wang}}]{dong2022infinite}%
  \BibitemOpen
  \bibfield  {author} {\bibinfo {author} {\bibfnamefont {Yulong}\ \bibnamefont
  {Dong}}, \bibinfo {author} {\bibfnamefont {Lin}\ \bibnamefont {Lin}},
  \bibinfo {author} {\bibfnamefont {Hongkang}\ \bibnamefont {Ni}}, \ and\
  \bibinfo {author} {\bibfnamefont {Jiasu}\ \bibnamefont {Wang}},\ }\bibfield
  {title} {\enquote {\bibinfo {title} {Infinite quantum signal processing},}\
  }\href@noop {} {\bibfield  {journal} {\bibinfo  {journal} {arXiv preprint
  arXiv:2209.10162}\ } (\bibinfo {year} {2022})}\BibitemShut {NoStop}%
\bibitem [{\citenamefont {Dong}\ \emph {et~al.}(2024)\citenamefont {Dong},
  \citenamefont {Lin}, \citenamefont {Ni},\ and\ \citenamefont
  {Wang}}]{dong2024robust}%
  \BibitemOpen
  \bibfield  {author} {\bibinfo {author} {\bibfnamefont {Yulong}\ \bibnamefont
  {Dong}}, \bibinfo {author} {\bibfnamefont {Lin}\ \bibnamefont {Lin}},
  \bibinfo {author} {\bibfnamefont {Hongkang}\ \bibnamefont {Ni}}, \ and\
  \bibinfo {author} {\bibfnamefont {Jiasu}\ \bibnamefont {Wang}},\ }\bibfield
  {title} {\enquote {\bibinfo {title} {Robust iterative method for symmetric
  quantum signal processing in all parameter regimes},}\ }\href@noop {}
  {\bibfield  {journal} {\bibinfo  {journal} {SIAM Journal on Scientific
  Computing}\ }\textbf {\bibinfo {volume} {46}},\ \bibinfo {pages}
  {A2951--A2971} (\bibinfo {year} {2024})}\BibitemShut {NoStop}%
\bibitem [{\citenamefont {Cheney}\ and\ \citenamefont
  {Light}(2009)}]{cheney2009course}%
  \BibitemOpen
  \bibfield  {author} {\bibinfo {author} {\bibfnamefont {Elliott~Ward}\
  \bibnamefont {Cheney}}\ and\ \bibinfo {author} {\bibfnamefont
  {William~Allan}\ \bibnamefont {Light}},\ }\href@noop {} {\emph {\bibinfo
  {title} {A course in approximation theory}}},\ Vol.\ \bibinfo {volume} {101}\
  (\bibinfo  {publisher} {American Mathematical Soc.},\ \bibinfo {year}
  {2009})\BibitemShut {NoStop}%
\bibitem [{\citenamefont {Wang}\ \emph {et~al.}(2022)\citenamefont {Wang},
  \citenamefont {Dong},\ and\ \citenamefont {Lin}}]{Wang2022energylandscapeof}%
  \BibitemOpen
  \bibfield  {author} {\bibinfo {author} {\bibfnamefont {Jiasu}\ \bibnamefont
  {Wang}}, \bibinfo {author} {\bibfnamefont {Yulong}\ \bibnamefont {Dong}}, \
  and\ \bibinfo {author} {\bibfnamefont {Lin}\ \bibnamefont {Lin}},\ }\bibfield
   {title} {\enquote {\bibinfo {title} {On the energy landscape of symmetric
  quantum signal processing},}\ }\href {\doibase 10.22331/q-2022-11-03-850}
  {\bibfield  {journal} {\bibinfo  {journal} {{Quantum}}\ }\textbf {\bibinfo
  {volume} {6}},\ \bibinfo {pages} {850} (\bibinfo {year} {2022})}\BibitemShut
  {NoStop}%
\bibitem [{\citenamefont {Tan}\ \emph {et~al.}(2023)\citenamefont {Tan},
  \citenamefont {Liu}, \citenamefont {Tran},\ and\ \citenamefont
  {Chuang}}]{tan2023error}%
  \BibitemOpen
  \bibfield  {author} {\bibinfo {author} {\bibfnamefont {Andrew~K}\
  \bibnamefont {Tan}}, \bibinfo {author} {\bibfnamefont {Yuan}\ \bibnamefont
  {Liu}}, \bibinfo {author} {\bibfnamefont {Minh~C}\ \bibnamefont {Tran}}, \
  and\ \bibinfo {author} {\bibfnamefont {Isaac~L}\ \bibnamefont {Chuang}},\
  }\bibfield  {title} {\enquote {\bibinfo {title} {Error correction of quantum
  algorithms: Arbitrarily accurate recovery of noisy quantum signal
  processing},}\ }\href@noop {} {\bibfield  {journal} {\bibinfo  {journal}
  {arXiv preprint arXiv:2301.08542}\ } (\bibinfo {year} {2023})}\BibitemShut
  {NoStop}%
\bibitem [{\citenamefont {Siegel}\ \emph {et~al.}(2023)\citenamefont {Siegel},
  \citenamefont {Mitarai},\ and\ \citenamefont
  {Fujii}}]{siegel2023algorithmic}%
  \BibitemOpen
  \bibfield  {author} {\bibinfo {author} {\bibfnamefont {Adam}\ \bibnamefont
  {Siegel}}, \bibinfo {author} {\bibfnamefont {Kosuke}\ \bibnamefont
  {Mitarai}}, \ and\ \bibinfo {author} {\bibfnamefont {Keisuke}\ \bibnamefont
  {Fujii}},\ }\bibfield  {title} {\enquote {\bibinfo {title} {Algorithmic error
  mitigation for quantum eigenvalues estimation},}\ }\href@noop {} {\bibfield
  {journal} {\bibinfo  {journal} {arXiv preprint arXiv:2308.03879}\ } (\bibinfo
  {year} {2023})}\BibitemShut {NoStop}%
\bibitem [{git(2024)}]{github2024}%
  \BibitemOpen
  \href@noop {} {\enquote {\bibinfo {title} {{gqsp-angle-finding}},}\ }\bibinfo
  {howpublished}
  {\href{https://github.com/quantum-programming/gqsp-angle-finding}{GitHub
  Repository}} (\bibinfo {year} {2024})\BibitemShut {NoStop}%
\bibitem [{\citenamefont {Berry}\ \emph {et~al.}(2024)\citenamefont {Berry},
  \citenamefont {Motlagh}, \citenamefont {Pantaleoni},\ and\ \citenamefont
  {Wiebe}}]{berry2024doubling}%
  \BibitemOpen
  \bibfield  {author} {\bibinfo {author} {\bibfnamefont {Dominic~W}\
  \bibnamefont {Berry}}, \bibinfo {author} {\bibfnamefont {Danial}\
  \bibnamefont {Motlagh}}, \bibinfo {author} {\bibfnamefont {Giacomo}\
  \bibnamefont {Pantaleoni}}, \ and\ \bibinfo {author} {\bibfnamefont {Nathan}\
  \bibnamefont {Wiebe}},\ }\bibfield  {title} {\enquote {\bibinfo {title}
  {Doubling efficiency of hamiltonian simulation via generalized quantum signal
  processing},}\ }\href@noop {} {\bibfield  {journal} {\bibinfo  {journal}
  {arXiv preprint arXiv:2401.10321}\ } (\bibinfo {year} {2024})}\BibitemShut
  {NoStop}%
\end{thebibliography}%

\clearpage

\appendix

\section{Details on angle finding algorithms for $(W_x, S_z)$-QSP} \label{app:angle_finding_Wx}


\subsection{Completion via root finding}

In this section, we \black{review} the procedure \black{in~\cite[Lemma 6]{Gily_n_2019}} for computing the polynomials $P, Q$ that satisfy the conditions of Theorem~\ref{thm:Wx_QSP_PQ_cond} given a function $f$ satisfying the conditions in Theorem~\ref{thm:Wx_QSP_f_cond}.
\black{Note that this completes the proof of ``if" part of Theorem~\ref{thm:Wx_QSP_f_cond}.}
For the subsequent discussion, let $f \in \R[x]$ be a polynomial such that $\deg(f) = l \leq d$, $\Parity(f) = d \bmod 2$ and $\abs{f(x)} \leq 1$ for all $x \in [-1, 1]$.

First, we introduce a real polynomial $A(x) \coloneqq 1 - f(x)^2 \in \R[x]$.
Since $f$ is a polynomial of degree $l$ with a definite parity, $A$ is an even polynomial of degree $2 l$, which can be easily verified as
\begin{eqnarray}
    \begin{cases}
A(- x)
= 1 - f(- x)^2
= 1 - f(x)^2
= A(x) & (f:\text{odd})\\
A(- x)
= 1 - f(- x)^2
= 1 - (- f(x))^2
= A(x) 
& (f:\text{even})
    \end{cases}.
\end{eqnarray}
By substituting $x = \sqrt{y}$ and defining $\tilde{A}(y) \coloneqq A(\sqrt{y})$, we observe that $\tilde{A}(y) \in \R[y]$ and $\deg(\tilde{A}) = l$.
\black{Let $S$ denote the multiset of roots of $\tilde{A}(y)$, considering the algebraic multiplicity.
\black{It can be shown that we can construct a partition of $S$ as}
\begin{eqnarray}
S = S_{(- \infty, 0)} \cup S_{[0, 1]} \cup S_{(1, \infty)} \cup S_H \cup \qty{s^* \colon s \in S_H},    
\end{eqnarray}
where we have defined the following subsets (submultisets) of $S$ defined as
\begin{align}
S_{(- \infty, 0)} &\coloneqq \qty{s \in S \colon s \in (- \infty, 0)}, \\
S_{[0, 1]} &\coloneqq \qty{s \in S \colon s \in [0, 1]}, \\
S_{(1, \infty)} &\coloneqq \qty{s \in S \colon s \in (1, \infty)}, \\
S_H &\coloneqq \qty{s \in S \colon \Im[s] > 0}.
\end{align}
Note that if $s \in S$ then also $s^* \in S$ since $A$ is a real polynomial.
}
Since $\tilde{A}(y) \geq 0$ for all $y \in [0, 1]$, each root in $S_{[0, 1]}$ appears an even number of times.
Therefore, a multiset $S'_{[0, 1]}$ exists where the multiplicity of each element is exactly half of that in $S_{[0, 1]}$.
Consequently, we can express $\tilde{A}$ using a constant $ C\in \R$ as 
\begin{equation}
\tilde{A}(y)
= C \prod_{s \in S_{(- \infty, 0)}} (y - s) \prod_{s \in S'_{[0, 1]}} (y - s)^2 \prod_{s \in S_{(1, \infty)}} (s - y) \prod_{s \in S_H} (y - s) (y - s^*).
\end{equation}

\black{In order to find a function $B$ such that $B(x)B^*(x) = \tilde{A}(x)$}, we rewrite the above terms using the functions $R_1(x; s)$ and $R_2(x; s)$ defined as
\begin{align}
R_1(x; s) &\coloneqq \sqrt{\abs{s - 1}} x + i \sqrt{\abs{s}} \sqrt{1 - x^2}, \\
R_2(x; s) &\coloneqq (\abs{s - 1} + \abs{s}) x^2 - \abs{s} + i \sqrt{(\abs{s - 1} + \abs{s})^2 - 1} \qty(x \sqrt{1 - x^2}).
\end{align}
For $s \in S_{(1, \infty)}$, $s \in S_{(- \infty, 0)}$ and $s \in S_H$, the following equations holds, respectively:
\begin{align}
R_1(x; s) R_1^*(x; s)
&= \qty(\sqrt{\abs{s - 1}} x + i \sqrt{\abs{s}} \sqrt{1 - x^2}) \qty(\sqrt{\abs{s - 1}} x - i \sqrt{\abs{s}} \sqrt{1 - x^2}) \nonumber \\
&= \abs{s - 1} x^2 + \abs{s} \qty(1 - x^2) \nonumber \\
&= (s - 1) x^2 + s \qty(1 - x^2) \nonumber \\
&= s - y, \\
R_1(x; s) R_1^*(x; s)
&= \abs{s - 1} x^2 + \abs{s} \qty(1 - x^2) \nonumber \\
&= - (s - 1) x^2 - s \qty(1 - x^2) \nonumber \\
&= y - s, \\
R_2(x; s) R_2^*(x; s)
&=
\begin{multlined}[t]
\qty((\abs{s - 1} + \abs{s}) x^2 - \abs{s} + i \sqrt{(\abs{s - 1} + \abs{s})^2 - 1} \qty(x \sqrt{1 - x^2})) \\
\cdot \qty((\abs{s - 1} + \abs{s}) x^2 - \abs{s} - i \sqrt{(\abs{s - 1} + \abs{s})^2 - 1} \qty(x \sqrt{1 - x^2}))
\end{multlined} \nonumber \\
&= \qty((\abs{s - 1} + \abs{s}) x^2 - \abs{s})^2 + \qty((\abs{s - 1} + \abs{s})^2 - 1) x^2 \qty(1 - x^2) \nonumber \\
&= x^4 + \qty(\abs{s - 1}^2 - \abs{s}^2 - 1) x^2 + \abs{s}^2 \nonumber \\
&= x^4 - \qty(s + s^*) x^2 + \abs{s}^2 \nonumber \\
&= (y - s) \qty(y - s^*).
\end{align}

Therefore $\tilde{A}$ can be rewrited as
\begin{equation}
\tilde{A}(y)
= C \prod_{s \in S_{(- \infty, 0)} \cup S_{(1, \infty)}} R_1(x; s) R_1^*(x; s) \prod_{s \in S'_{[0, 1]}} (y - s)^2 \prod_{s \in S_H} R_2(x; s) R_2^*(x; s).
\end{equation}
Since $\tilde{A}(y) \geq 0$ for all $y \in [0, 1]$, it is clear that the scaling factor $C$ is non-negative.
Thus, by defining a function $B(x)$ as
\begin{equation}
B(x)
= \sqrt{C} \prod_{s \in S_{(- \infty, 0)} \cup S_{(1, \infty)}} R_1(x; s) \prod_{s \in S'_{[0, 1]}} \qty(x^2 - s) \prod_{s \in S_H} R_2(x; s),
\end{equation}
we have $B(x) B^*(x) = \tilde{A}(y)$.

Here, $B(x)$ can be expressed as $B(x) = P'(x) + Q'(x) \sqrt{1 - x^2}$, where $P'$ is a real polynomial of degree $l$ with parity-($l \bmod 2$) and $Q'$ is a real polynomial of degree $l - 1$ with parity-($l - 1\bmod 2$).
Thus, if $l \equiv d \bmod 2$, by defining $P = f + i P'$ and $Q = i Q'$, $P$ and $Q$ satisfies the conditions of Theorem~\ref{thm:Wx_QSP_PQ_cond}.
In case where $l \not\equiv d \bmod 2$, by redefining $B(x)$ as $B(x) \cdot \qty(x + i \sqrt{1 - x^2})$, $B(x) B^*(x) = \tilde{A}(y)$ still holds, while the degrees of $P$ and $Q$ increase by one and their parities invert.


\subsection{Decomposition via carving}

In this section, we describe the carving algorithm following Ref.~\cite[Theorem 3]{Gily_n_2019}, which is a consecutive process for computing the phase factors $\vb*{\Phi}$ from the polynomials $P$ and $Q$ that satisfy the conditions in Theorem~\ref{thm:Wx_QSP_PQ_cond}.
\black{Note that the existence of concrete algorithm for decomposition completes the ``if" part of Theorem~\ref{thm:Wx_QSP_PQ_cond}.
}

Let us begin by supposing $\deg(P) = 0$.
Due to the conditions of Theorem~\ref{thm:Wx_QSP_PQ_cond}, the polynomials $P$ and $Q$ take the form $P \equiv e^{i \phi_0}$ and $Q \equiv 0$ with some $\phi_0 \in \R$.
It follows that $\vb*{\Phi} = (\phi_0, \frac{\pi}{2}, - \frac{\pi}{2}, \ldots, \frac{\pi}{2}, - \frac{\pi}{2}) \in \R^{d + 1}$ is a feasible solution from the fact that $d$ is even and the following identity:
\begin{equation}
W_x(\theta) S_z \qty(\frac{\pi}{2}) W_x(\theta) S_z \qty(- \frac{\pi}{2})
= I.
\end{equation}

\black{Now we prove by induction on $d$ that we can recursively determine $\vb*{\Phi}$.}
\black{For the purpose of proving the existence of $\vb*{\Phi}$ by induction on $d$, we assume that there exists an angle sequence when the degree of polynomial is less than $d-1$.}
Since $P(x)^2 + \qty(1 - x^2) Q(x)^2 = 1$ as a polynomial, assuming that $0 < \deg(P) = l \leq d$, we must have that $\deg(Q) = l - 1$ and 
\begin{equation}
\abs{p_l}^2 - \abs{q_{l - 1}}^2 = 0,
\end{equation}
where $p_l$ and $q_{l - 1}$ denote the leading coefficients of $P$ and $Q$, respectively.
Therefore, there exists some $\phi_d \in \R$ such that $e^{2 i \phi_d} = \frac{p_l}{q_{l - 1}}$.

\black{Next, we define lower-order} polynomials  $\tilde{P}(x)$ and $\tilde{Q}(x)$ as
\begin{align}
\tilde{P}(x) &\coloneqq e^{- i \phi_d} \qty(x P(x) + \frac{p_l}{q_{l - 1}} \qty(1 - x^2) Q(x)), \\
\tilde{Q}(x) &\coloneqq e^{- i \phi_d} \qty(- P(x) + \frac{p_l}{q_{l - 1}} x Q(x)).
\end{align}
It can be shown that these polynomials satisfy
\begin{equation}
\mqty(\tilde{P}(x) & i \tilde{Q}(x) \sqrt{1 - x^2} \\ i \tilde{Q}^*(x) \sqrt{1 - x^2} & \tilde{P}^*(x)) W_x(\theta) S_z(\phi_d)
= U_{P Q}(x),
\end{equation}
\black{which is a consequence from the following calculation:}
\begin{align}
U_{P Q}(x) S_z(\phi_d)^{-1} W_x(\theta)^{-1}
&= \mqty(P(x) & i Q(x) \sqrt{1 - x^2} \\ i Q^*(x) \sqrt{1 - x^2} & P^*(x)) \mqty(e^{- i \phi_d} & 0 \\ 0 & e^{i \phi_d}) \mqty(x & - i \sqrt{1 - x^2} \\ - i \sqrt{1 - x^2} & x) \nonumber \\
&= \mqty(e^{- i \phi_d} \qty(x P(x) + e^{2 i \phi_d} \qty(1 - x^2) Q(x)) & i e^{- i \phi_d} \qty(- P(x) + e^{2 i \phi_d} x Q(x)) \sqrt{1 - x^2} \\ i e^{i \phi_d} \qty(- P^*(x) + e^{- 2 i \phi_d} x Q^*(x)) \sqrt{1 - x^2} & e^{i \phi_d} \qty(x P^*(x) +  e^{- 2 i \phi_d} \qty(1 - x^2) Q^*(x))) \nonumber \\
&= \mqty(\tilde{P}(x) & i \tilde{Q}(x) \sqrt{1 - x^2} \\ i \tilde{Q}^*(x) \sqrt{1 - x^2} & \tilde{P}^*(x)).
\end{align}
Furthermore, we can easily show that $\deg(\tilde{P}) \leq l - 1 \leq d - 1$, $\deg(\tilde{Q}) \leq l - 2 \leq d - 2$, $\Parity(P) = d - 1 \bmod 2$ and $\Parity(Q) = d - 2 \bmod 2$.
Due to the unitarity of $U_{P Q}$, $W_x(\theta)$ and $S_z(\phi_d)$, $U_{\tilde{P} \tilde{Q}}$ is \black{also} a unitary matrix.
Hence, $\tilde{P}$ and $\tilde{Q}$ satisfy the conditions of Theorem~\ref{thm:Wx_QSP_PQ_cond} for degree of $d - 1$, which, \black{by the assumption of the induction,} implies the existence of $\vb*{\tilde{\Phi}} = (\tilde{\phi}_0, \ldots, \tilde{\phi}_{d - 1}) \in \R^d$ such that
\begin{equation}
S_z(\tilde{\phi}_0) \prod_{k = 1}^{d - 1} W_x(\theta) S_z(\tilde{\phi}_k)
= \mqty(\tilde{P}(x) & i \tilde{Q}(x) \sqrt{1 - x^2} \\ i \tilde{Q}^*(x) \sqrt{1 - x^2} & \tilde{P}^*(x)).
\end{equation}
Consequently, $\vb{\Phi} = (\tilde{\phi}_0, \ldots, \tilde{\phi}_{d - 1}, \phi_d) \in \R^{d + 1}$ is a feasible solution for the given polynomials $P$ and $Q$.


\subsection{Decomposition via halving}

In this section, we describe the halving algorithm, which is another method to determine the phase factors $\vb*{\Phi}$ from the polynomials $P$ and $Q$ that meet the conditions in Theorem~\ref{thm:Wx_QSP_PQ_cond}.
\black{Note that, in contrast to the carving algorithm that essentially proves the ``if" part of Theorem~\ref{thm:Wx_QSP_PQ_cond}, the halving algorithm explicitly relies on Theorem~\ref{thm:Wx_QSP_PQ_cond}.
}

\black{It is important to note that, in the case of $\deg(P) < d$, we can embed a feasible solution for the low-degree case 
$(\phi_0, \ldots, \phi_{\deg(P)})$ as} $\vb*{\Phi} = \qty(\phi_0, \ldots, \phi_{\deg(P)}, \frac{\pi}{2}, - \frac{\pi}{2}, \ldots, \frac{\pi}{2}, - \frac{\pi}{2}) \in \R^{d + 1}$, since the following equation holds:
\begin{equation}
W_x(\theta) S_z \qty(\frac{\pi}{2}) W_x(\theta) S_z \qty(- \frac{\pi}{2})
= I.
\end{equation}
\black{Therefore, we hereafter} consider the case where $d = \deg(P)$ and the complex polynomials $P$, $Q$ are denoted by $P(x) = \sum_{k = 0}^d p_k x^k$ and $Q(x) = \sum_{k = 0}^{d - 1} q_k x^k$.

Let us first consider the case where $d = 0$.
Due to the conditions in Theorem~\ref{thm:Wx_QSP_PQ_cond}, the polynomials $P$ and $Q$ take the form $P \equiv e^{i \phi_0}$ and $Q \equiv 0$ with some $\phi_0 \in \R$.
Hence, $\vb*{\Phi} = (\phi_0)$ is a feasible solution.
If $d = 1$,
a feasible solution can be obtained as $\vb*{\Phi} = (\phi_0, \phi_1)$ with $\phi_0$ and $\phi_1$ given by
\begin{align}
\phi_0 &\coloneqq \frac{\arg(p_1) + \arg(q_0)}{2}, \\
\phi_1 &\coloneqq \frac{\arg(p_1) - \arg(q_0)}{2}.
\end{align}
\black{To see the validity of the solution, we can explicitly compute as} 
\begin{align}
U_{\vb*{\Phi}}
&= \mqty(e^{i \phi_0} & 0 \\ 0 & e^{- i \phi_0}) \mqty(x & i \sqrt{1 - x^2} \\ i \sqrt{1 - x^2} & x) \mqty(e^{i \phi_1} & 0 \\ 0 & e^{- i \phi_1}) \nonumber \\
&= \mqty(e^{i (\phi_0 + \phi_1)} x & i e^{i (\phi_0 - \phi_1)} \sqrt{1 - x^2} \\ i e^{- i (\phi_0 - \phi_1)} \sqrt{1 - x^2} & e^{- i (\phi_0 + \phi_1)} x) \nonumber \\
&= \mqty(p_1 x & i q_0 \sqrt{1 - x^2} \\ i q_0^* \sqrt{1 - x^2} & p_1^* x) \nonumber \\
&= U_{P Q}.
\end{align}

\black{Next, we prove by induction that $\vb*{\Phi}$ for $d\geq 2$ can be obtained from recursive decomposition of polynomials into lower-degree polynomials.}
\black{Our goal is to halve the polynomials $P$ and $Q$ into components of polynomials of degree less than $l$ and $d - l$ for $l \coloneqq \floor{d / 2}$.
To be precise, we aim to identify polynomials $P_1, Q_1, P_2, Q_2 \in \C[x]$ that meet all the conditions below:
\begin{enumerate}
\item[(a)] $P_1$, $Q_1$ satisfy the conditions in Theorem~\ref{thm:Wx_QSP_PQ_cond} for $l$.
\item[(b)] $P_2$, $Q_2$ satisfy the conditions in Theorem~\ref{thm:Wx_QSP_PQ_cond} for $d - l$.
\item[(c)] $U_{P_1 Q_1} U_{P_2 Q_2} = U_{P Q}$.
\item[(d)] $P_1(1) = 1$.
\end{enumerate}
If feasible solutions for $P_1$, $Q_1$ and $P_2$, $Q_2$ are obtained as $\qty(\phi_0, \phi_1, \ldots, \phi_l)$, $\qty(\phi'_0, \phi'_1, \ldots, \phi'_{d - l})$, respectively,
we can combine them as $\vb*{\Phi} = \qty(\phi_0, \phi_1, \ldots, \phi_l + \phi'_0, \phi'_1, \ldots, \phi'_{d - l})$ to constitute a feasible solution for polynomials $P$ and $Q$.
Therefore, the problem boils down to constructing the lower-order polynomials in an appropriate manner.
}

Using a complex vector $(x_m, \ldots, x_{l - 2}, x_l, y_n, \ldots, y_{l - 3}, y_{l - 1}) \in \C^{2 l + 1}$, let us express polynomials $P_1(x)$ and $Q_1(x)$  as
\begin{align}
P_1(x) &= \sum_{\substack{\abs{k} \leq l, \\ k \equiv l \bmod 2}} x_k w^k, \\
Q_1(x) &= \sum_{\substack{\abs{k} \leq l - 1, \\ k \equiv l - 1 \bmod 2}} y_k w^k,
\end{align}
where $m$ and $n$ denote $l \bmod 2$ and $(l - 1) \bmod 2$, respectively.
Since we must have the equations $U_{P_1 Q_1} U_{P_2 Q_2} = U_{P Q}$ and $U_{P_1 Q_1}^\dagger U_{P_1 Q_1} = I$, we can express $P_2$ and $Q_2$ with $P$, $Q$, $P_1$ and $Q_1$ as
\begin{align}
P_2(x) &= P_1^*(x) P(x) + \qty(1 - x^2) Q_1(x) Q^*(x), \\
Q_2(x) &= P_1^*(x) Q(x) - Q_1(x) P^*(x).
\end{align}
To fulfill the conditions (a) and (b) on the degrees of the polynomials, $\deg(P_2) \leq d - l$ and $\deg(Q_2) \leq d - l - 1$, the following equations must hold:
\begin{align}
\sum_{k_1 + k_2 = k} \qty(p_{k_2} x_{k_1}^* + \qty(q_{k_2}^* - q_{k_2 - 2}^*) y_{k_1}) &= 0, & (k = d + l, d + l - 2, \ldots, d - l + 2) \\
\sum_{k_1 + k_2 = k} \qty(q_{k_2} x_{k_1}^* - p_{k_2}^* y_{k_1}) &= 0. & (k = d + l - 1, d + l - 3, \ldots, d - l + 1)
\end{align}
Moreover, to satisfy the condition (d), we must have $\sum_k x_k^* = 1$.

By summarizing the above discussion, we can write down the conditions as
\begin{equation}
\label{eq:halving_conditiong_wx}
\mqty(M_{1 1} & M_{1 2} \\ M_{2 1} & M_{2 2} \\ M_{3 1} & M_{3 2})
\mqty(x_l^* \\ x_{l - 2}^* \\ \vdots \\ x_m^* \\ y_{l - 1} \\ y_{l - 3} \\ \vdots \\ y_n)
= \mqty(0 \\ \vdots \\ 0 \\ 1),
\end{equation}
where $M_{1 1}, \ldots, M_{3 2}$ are defined as
\begin{align}
M_{1 1} &\coloneqq \mqty(
p_d & 0 & 0 & \cdots & 0 \\
p_{d - 2} & p_d & 0 & \cdots & 0 \\
\vdots & \vdots & \ddots & \ddots & \vdots \\
p_{d - l + m} & p_{d - l + m + 2} & \cdots & p_{d - 2} & p_d \\
\vdots & \vdots & \ddots & \vdots & \vdots \\
p_{d - 2 l + 2} & p_{d - 2 l + 4} & \cdots & p_{d - l - m} & p_{d - l - m + 2}
), \\
M_{1 2} &\coloneqq \mqty(
- q_{d - 1}^* & 0 & 0 & \cdots & 0 \\
q_{d - 1}^* - q_{d - 3}^* & - q_{d - 1}^* & 0 & \cdots & 0 \\
\vdots & \vdots & \ddots & \ddots & \vdots \\
q_{d - l + n + 2}^* - q_{d - l + n}^* & q_{d - l + n + 4}^* - q_{d - l + n + 2}^* & \cdots & q_{d - 1}^* - q_{d - 3}^* & - q_{d - 1}^* \\
\vdots & \vdots & \ddots & \vdots & \vdots \\
q_{d - 2 l + 3}^* - q_{d - 2 l + 1}^* & q_{d - 2 l + 5}^* - q_{d - 2 l + 3}^* & \cdots & q_{d - l - n}^* - q_{d - l - n - 2}^* & q_{d - l - n + 2}^* - q_{d - l - n}^*
), \\
M_{2 1} &\coloneqq \mqty(
q_{d - 1} & 0 & 0 & \cdots & 0 \\
q_{d - 3} & q_{d - 1} & 0 & \cdots & 0 \\
\vdots & \vdots & \ddots & \ddots & \vdots \\
q_{d - l + m - 1} & q_{d - l + m + 1} & \cdots & q_{d - 3} & q_{d - 1} \\
\vdots & \vdots & \ddots & \vdots & \vdots \\
q_{d - 2 l + 1} & q_{d - 2 l + 3} & \cdots & q_{d - l - m - 1} & q_{d - l - m + 1}
), \\
M_{2 2} &\coloneqq \mqty(
- p_d^* & 0 & 0 & \cdots & 0 \\
- p_{d - 2}^* & - p_d^* & 0 & \cdots & 0 \\
\vdots & \vdots & \ddots & \ddots & \vdots \\
- p_{d - l + n + 1}^* & - p_{d - l + n + 3}^* & \cdots & - p_{d - 2}^* & - p_d^* \\
\vdots & \vdots & \ddots & \vdots & \vdots \\
- p_{d - 2 l + 2}^* & - p_{d - 2 l + 4}^* & \cdots & - p_{d - l - n - 1}^* & - p_{d - l - n + 1}^*
), \\
M_{3 1} &\coloneqq \mqty(1 & \cdots & 1), \\
M_{3 2} &\coloneqq \mqty(0 & \cdots & 0).
\end{align}
We can show that the conditions (a)-(d) are equivalent to the conditions (a)-(e) for the halving algorithm of $(W_z, S_x)$-QSP.
As we also mention in Appendix.~\ref{app:angle_finding_Wz}, due to Theorem~\ref{thm:Wx_QSP_PQ_cond} and the uniqueness of the angle sequence, it is shown that angle sequence exists if and only if Eq.~\eqref{eq:halving_conditiong_wx} is satisfied.
This proof guarantees \black{the necessity of the conditions (a)-(d); the solution of the equation Eq.~\eqref{eq:halving_conditiong_wx} uniquely exists if conditions are satisfied, up to scale.}
For proof of sufficiency \black{of the conditions,} see Ref.~\cite{chao2020finding}.

\section{Details on angle finding algorithms for $(W_z, S_x)$-QSP} \label{app:angle_finding_Wz}

\subsection{Completion via root finding}
\label{sec:wz_completion_via_root_finding}

\black{In this section, we describe the root finding algorithm for the task of completion, i.e.,} how to compute the Laurent polynomials $F, G$ that meet the conditions provided in Theorem~\ref{thm:Wz_QSP_FG_cond}, assuming that a function $f$ satisfying the conditions in Theorem~\ref{thm:Wz_QSP_f_cond} is given.
\black{Here we provide a more rigorous version of \cite{chao2020finding} and \cite{Haah_2019} in the sense that this algorithm provides a complete proof of the ``if" part of Theorem~\ref{thm:Wz_QSP_f_cond}, as it does not rely on the assumption that $\abs{f(w)}$ is {\it strictly} less than $1$ for $w \in U(1)$.}
In the following, let $f \in \LaurentR{w}$ be a Laurent polynomial with real coefficients such that $\deg(f) = l \leq d$, $\Parity(f) = d \bmod 2$ and $\abs{f(w)} \leq 1$ for all $w \in U(1)$.

First, we define $F(w) \coloneqq f(w)$ and introduce a Laurent polynomial with real coefficients $A(w) \coloneqq 1 - F(w) F \qty(w^{-1}) \in \LaurentR{w}$.
Since $f$ is a Laurent polynomial of degree $l$ with a definite parity, $w^{2 l} A(w)$ becomes a real polynomial and $A$ is an even Laurent polynomial of degree $2 l$, which can be easily verified as
\begin{equation}
    \begin{cases}
A(- w)
= 1 - F(- w) F \qty(- w^{-1})
= 1 - F(w) F \qty(w^{-1})
= A(w)
& (f:\text{odd}) \\
A(- w)
= 1 - F(- w) F \qty(- w^{-1})
= 1 - (- F(w)) \qty(- F \qty(w^{-1}))
= A(w)
& (f:\text{even})
    \end{cases}.
\end{equation}

Note that the equation $A(1 / w) = A \qty(w^*) = A \qty(1 / w^*) = A(w)$ holds for $w \in U(1)$.
Applying the identity theorem, we deduce that $A(1 / w) = A(w^*) = A(1 / w^*) = A(w)$ also holds for $w \in \C \setminus \qty{0}$.

Let $S$ denote the multiset of roots for the polynomial $w^{2 l} A(w) \in \R[w]$.
We introduce the following subsets (submultisets) of $S$ defined as
\begin{align}
S_{(-1, 1)} &\coloneqq \qty{s \in S \colon s \in (-1, 1)}, \\
S_{U(1)} &\coloneqq \qty{s \in S \colon \abs{s} = 1}, \\
S_D &\coloneqq \qty{s \in S \colon \abs{s} < 1, \Im[s] > 0},
\end{align}
for which we can construct a partition of $S$ as follows:
\begin{equation*}
S = S_{(-1, 1)} \cup S_{U(1)} \cup S_D \cup \qty{1 / s \colon s \in S_{(-1, 1)} \cup S_{U(1)} \cup S_D} \cup \qty{s^* \colon s \in S_D} \cup \qty{1 / s^* \colon s \in S_D}.
\end{equation*}

Given that $A(w) \geq 0$ for all $w \in U(1)$, the multiplicity of each root in $S_{U(1)}$ is an even number.
Thus, we can construct a multiset $S'_{U(1)}$ where the multiplicity of each element is precisely half of that found in $S_{U(1)}$.
This allows us to rewrite $A(w)$ using constants $C_1, C_2 \in \C$ as
\begin{align}
A(w)
&= C_1 w^{- 2 l} \prod_{s \in S_{(-1, 1)}} (w - s) (w - 1 / s) \prod_{s \in S'_{U(1)}} (w - s)^2 \prod_{s \in S_D} (w - s) (w - 1 / s) (w - s^*) (w - 1 / s^*) \nonumber \\
&= C_2 \prod_{s \in S_{(-1, 1)}} (w - s) (1 / w - s) \prod_{s \in S'_{U(1)}} (w - s) (1 / w - s^*) \prod_{s \in S_D} (w - s) (1 / w - s^*) (w - s^*) (1 / w - s).
\end{align}
Obviously, this can be rewritten for $w \in U(1)$ as
\begin{equation}
A(w)
= C_2 \prod_{s \in S_{(-1, 1)}} \abs{w - s}^2 \prod_{s \in S'_{U(1)}} \abs{w - s}^2 \prod_{s \in S_D} \abs{w - s}^2 \abs{w - s^*}^2.
\end{equation}
\black{Given that $A(w) \geq 0$ for all $w \in U(1)$,} it implies that we must have $C_2 \geq 0$.
\black{Thus, we have a polynomial $G$ that satisfies the conditions  $G(w) \in \LaurentR{w}$, $G(w) G \qty(w^{-1}) = 1 - F(w) F \qty(w^{-1})$ for $w \in U(1)$ and $\deg(G) = l \leq d$,
by taking $G(w)$ as}
\begin{equation}
G(w)
\coloneqq \sqrt{C_2} w^{- l} \prod_{s \in S_{(-1, 1)}} (w - s) \prod_{s \in S'_{U(1)}} (w - s) \prod_{s \in S_D} (w - s) (w - s^*).
\end{equation}
\black{Note that} all the roots of $A(w)$ will appear in pairs as $\qty{s, - s}$ since $A(w)$ is an even function. This implies that all the roots of $G(w)$ will also come in pairs $\qty{s, - s}$,
and thus $G(w)$ has a parity of $l \bmod 2$.
Consequently, if $l \equiv d \bmod 2$, the Laurent polynomials $F$ and $G$ meet all the conditions in Theorem~\ref{thm:Wz_QSP_FG_cond}.
In case where $l \not\equiv d \bmod 2$, by redefining $w G(w)$ with $G(w)$, $G(w) G \qty(w^{-1}) = 1 - F(w) F \qty(w^{-1})$ still holds and the parity is reversed.

The polynomial $G$ can be rewritten more concisely as
\begin{equation}
G(w)
= \sqrt{C_2} w^{- l} \prod_{s \in S, \abs{s} < 1} (w - s) \prod_{s \in S'_{U(1)}} (w - s),
\end{equation}
since the subsets of $S$ can be merged as
\begin{equation}
S_{(-1, 1)} \cup S_D \cup \qty{s^* \colon S_D} = \qty{s \in S \colon \abs{s} < 1}.
\end{equation}
By assuming $\abs{f(w)} < 1$ for $w \in U(1)$, we must have $S_{U(1)} = \emptyset$ and $G(w)$ can be easily calculated by the following equation:
\begin{equation}
G(w) = \sqrt{C_2} w^{- l} \prod_{s \in S, \abs{s} < 1} (w - s).
\end{equation}
Therefore, by finding all the roots of $A(w)$ in the unit circle and comparing the value at $w = 1$ to determine $C_2$, one can compute the polynomials $F$ and $G$.


\subsection{Completion via Prony's method}
\label{sec:wz_qsp_completion_via_Prony's_method}

In this section, we review \black{the completion algorithm via Prony's method provided in Ref.~\cite{Ying_2022}, which determines} the Laurent polynomials $F, G$ that satisfy the conditions of Theorem~\ref{thm:Wz_QSP_FG_cond} from a function $f$ that meets the conditions stated in Theorem~\ref{thm:Wz_QSP_f_cond}.
In the following, let $f \in \LaurentR{w}$ be a Laurent polynomial with real coefficients such that $\deg(f) = l \leq d$, $\Parity(f) = d \bmod 2$ and $\abs{f(w)} < 1$ for all $w \in U(1)$.
Note that we assume that $\abs{f(w)}$ is strictly less than one for $w \in U(1)$.

Let $\qty{\xi_j}$ denote a set of roots of $g(w)) \coloneqq 1 - \abs{f(w)}^2$, and we aim to calculate $\prod_{\abs{\xi_j} < 1} \qty(w - \xi_j)$ as proved in Section \ref{sec:wz_completion_via_root_finding}.
The polynomial $\prod_{\abs{\xi_j} < 1} \qty(w - \xi_j)$ is denoted by $m_0 w^0 + \cdots m_{2 l} w^{2 l}$, and we will determine the coefficients $(m_0, \ldots, m_{2 l})$.
Since the poles of the reciprocal $h(w) \coloneqq g(w)^{-1}$ are the roots of $g(w)$ denoted by $\qty{\xi_j}$, then with constants $\qty{c_j}$, $h(w)$ can be expressed as
\begin{equation}
h(w) = \sum_{\xi_j} \frac{c_j}{w - \xi_j} + \text{constant}.
\end{equation}

\black{In order to investigate the holistic behaviors of the poles of $h(w)$ in the unit circle while avoiding explicit computation of them,} let us consider the integrals
\begin{equation}
\frac{1}{2 \pi i} \int_{\abs{w} = 1} \frac{h(w)}{w^k} \frac{\dd{w}}{w}
\end{equation}
for integer values of $k \leq - 1$.
For a fixed $k \leq - 1$, by using the residue theorem at $\qty{\xi_i}$,
\begin{equation}
\frac{1}{2 \pi i} \int_{\abs{w} = 1} \frac{h(w)}{w^k} \frac{\dd{w}}{w}
= \frac{1}{2 \pi i} \sum_{\xi_j} \int_{\abs{w} = 1} - \frac{c_j w^{- (k + 1)}}{\xi_j - w} \dd{w}
= - \sum_{\abs{\xi_j} < 1} c_j \xi_j^{- (k + 1)}.
\end{equation}
On the other hand, with the Fourier coefficients of $h(t) \equiv h \qty(e^{i t})$ denoted by $\qty{\hat{h}_k}$, these integrals can be written as
\begin{equation}
\frac{1}{2 \pi i} \int_{\abs{w} = 1} \frac{h(w)}{w^k} \frac{\dd{w}}{w}
= \frac{1}{2 \pi} \int_0^{2 \pi} h(t) e^{- i k t} \dd{t}
= \hat{h}_k.
\end{equation}
Therefore, the following equations holds true:
\begin{equation}
\label{eq:wz_prony's_h_hat_minus}
\hat{h}_- \coloneqq \mqty(\hat{h}_{- 1} \\ \hat{h}_{- 2} \\ \vdots)
= \frac{1}{2 \pi i} \int_{\abs{w} = 1} h(w) \mqty(w^0 \\ w^1 \\ \vdots) \dd{w}
= \mqty(- \sum_{\abs{\xi_j} < 1} c_j \xi_j^0 \\ - \sum_{\abs{\xi_j} < 1} c_j \xi_j^1 \\ \vdots)
= - \sum_{\abs{\xi_j} < 1} c_j \mqty(\xi_j^0 \\ \xi_j^1 \\ \vdots).
\end{equation}

\black{It is useful to} introduce a shift operator $\mathcal{S} \colon (a_1, a_2, a_3, \ldots) \mapsto (a_2, a_3, a_4, \ldots)$ that shifts the elements of a semi-infinite vector to the left.
For any $\xi_j$ with $\abs{\xi_j} < 1$,
\begin{equation}
\mathcal{S} \mqty(\xi_j^0 \\ \xi_j^1 \\ \vdots) = S \mqty(\xi_j^1 \\ \xi_j^2 \\ \vdots).
\end{equation}
\black{Since the operators $\qty{\mathcal{S} - \xi_{j'}}$ all commute, for any $\xi_j$ with $\abs{\xi_j} < 1$,}
\begin{align}
\prod_{\abs{\xi_{j'}} < 1} (\mathcal{S} - \xi_{j'}) \mqty(\xi_{j'}^0 \\ \xi_{j'}^1 \\ \vdots)
&= \qty(\prod_{\abs{\xi_{j'}} < 1, j' \neq j} (\mathcal{S} - \xi_{j'})) (\mathcal{S} - \xi_j) \mqty(\xi_j^0 \\ \xi_j^1 \\ \vdots) \nonumber \\
&= \qty(\prod_{\abs{\xi_{j'}} < 1, j' \neq j} (\mathcal{S} - \xi_{j'})) \qty(\mqty(\xi_j^1 \\ \xi_j^2 \\ \vdots) - \mqty(\xi_j^1 \\ \xi_j^2 \\ \vdots)) \nonumber \\
&= 0
\end{align}
\black{As shown in Eq.~\eqref{eq:wz_prony's_h_hat_minus}, since $\hat{h}_-$ is a linear combination of the semi-infinite vectors $\qty{(\xi_j^0, \xi_j^1, \ldots)}$ with weights $\qty{- c_j}$,}
\begin{equation}
\prod_{\abs{\xi_{j'}} < 1} (\mathcal{S} - \xi_{j'}) \hat{h}_-
= \prod_{\abs{\xi_{j'}} < 1} (\mathcal{S} - \xi_{j'}) \qty(- \sum_{\abs{\xi_j} < 1} c_j \mqty(\xi_j^0 \\ \xi_j^1 \\ \vdots))
= - \sum_{\abs{\xi_j} < 1} c_j \prod_{\abs{\xi_{j'}} < 1} (\mathcal{S} - \xi_{j'}) \mqty(\xi_j^0 \\ \xi_j^1 \\ \vdots)
= 0.
\end{equation}
\black{Thus, we must have $m_0(\mathcal{S}^0 \hat{h}_-) + \cdots + m_{2 d} (\mathcal{S}^{2 d} \hat{h}_-) = 0$, and we can express the conditions on $(m_0, \ldots, m_{2 d})$ as} 
\begin{equation}
\mqty(
\hat{h}_{- 1} & \hat{h}_{- 2} & \cdots & \hat{h}_{- (2 d + 1)} \\
\hat{h}_{- 2} & \hat{h}_{- 3} & \cdots & \hat{h}_{- (2 d + 2)} \\
\vdots & \vdots & \ddots & \vdots
)
\mqty(m_0 \\ \vdots \\ m_{2 d})
= 0.
\end{equation}
\black{This follows from the fact that} the term $\mathcal{S}^k \hat{h}_-$, the vector obtained by applying the operator $\mathcal{S}$ to $\hat{h}_-$ for $k$ iterations, can be written as
\begin{equation}
\mathcal{S}^k \hat{h}_-
= \mqty(\hat{h}_{- k - 1} \\ \hat{h}_{- k - 2} \\ \vdots).
\end{equation}
In other words, by computing a non-zero vector in the null-space of the matrix above, one can obtain a possible vector $(m_0, \ldots, m_{2 d})$.
\black{In the actual numerical implementation}, only the first $2 d + 2$ rows of the semi-infinite matrix are required to determine $(m_0, \ldots, m_{2 d})$.
Namely, it is sufficient to consider the following equation:
\begin{equation}
\mqty(
\hat{h}_{- 1} & \hat{h}_{- 2} & \cdots & \hat{h}_{- (2 d + 1)} \\
\hat{h}_{- 2} & \hat{h}_{- 3} & \cdots & \hat{h}_{- (2 d + 2)} \\
\vdots & \vdots & \ddots & \vdots \\
\hat{h}_{- (2 d + 2)} & \hat{h}_{- (2 d + 3)} & \cdots & \hat{h}_{- (4 d + 2)}
)
\mqty(m_0 \\ \vdots \\ m_{2 d})
= 0.
\end{equation}
The existence of such solution is guaranteed by the proof in Appendix \ref{sec:wz_completion_via_root_finding}.


\subsection{Decomposition via halving}

In this section, \black{we overview the halving algorithm provided in  Ref.~\cite{chao2020finding}} that determines the phase factors $\vb*{\Phi}$ from the Laurent polynomials $F$ and $G$ that meet the conditions in Theorem~\ref{thm:Wz_QSP_FG_cond}.

As in the case for $(W_x, S_z)$-QSP,  if $\deg(F) < d$ is satisfied, one can embed a feasible solution as$\vb*{\Phi} = (\phi_0, \ldots, \phi_{\deg(F)},\\ \frac{\pi}{2}, - \frac{\pi}{2}, \ldots, \frac{\pi}{2}, - \frac{\pi}{2}) \in \R^{d + 1}$, as the following equation holds true:
\begin{equation}
W_z(\theta) S_x \qty(\frac{\pi}{2}) W_z(\theta) S_x \qty(- \frac{\pi}{2})
= I.
\end{equation}
\black{Therefore, it suffices to assume 
 $d = \deg(F)$} and the Laurent polynomials with complex coefficients $F$, $G$ are denoted by $F(w) = \sum_{k = - d}^d f_k w^k$ and $G(w) = \sum_{k = - d}^d g_k w^k$.

\black{If $d=0$, due to the conditions in Theorem~\ref{thm:Wz_QSP_FG_cond}, the Laurent polynomials $F$ and $G$ take the form $F \equiv \cos \phi_0$ and $G \equiv \sin \phi_0$ with some $\phi_0 \in \R$.
Hence, $\vb*{\Phi} = (\phi_0)$ is a feasible solution.
If $d = 1$ in turn, we can construct a solution as
 $\vb*{\Phi}=(\phi_0, \phi_1)$, where $\phi_0$ and $\phi_1$ are defined using $\theta_{\mathrm{sum}} \coloneqq \arg(F(1) + i G(1))$ and $\theta_{\mathrm{diff}} \coloneqq \arg(F(i) - i G(i)) - \frac{\pi}{2}$ as follows:}
\begin{align}
\phi_0 &\coloneqq \frac{\theta_{\mathrm{sum}} + \theta_{\mathrm{diff}}}{2}, \\
\phi_1 &\coloneqq \frac{\theta_{\mathrm{sum}} - \theta_{\mathrm{diff}}}{2}.
\end{align}
\black{To validate the solutions, we explicitly evaluate the unitaries for $w=1, i$:}
\begin{align}
U_{\vb*{\Phi}}(1)
&= \mqty(\cos \phi_0 & i \sin \phi_0 \\ i \sin \phi_0 & \cos \phi_0) \mqty(1 & 0 \\ 0 & 1) \mqty(\cos \phi_1 & i \sin \phi_1 \\ i \sin \phi_1 & \cos \phi_1) \nonumber \\
&= \mqty(\cos(\phi_0 + \phi_1) & i \sin(\phi_0 + \phi_1) \\ i \sin(\phi_0 + \phi_1) & \cos(\phi_0 + \phi_1)) \nonumber \\
&= \mqty(\cos \theta_{\mathrm{sum}} & i \sin \theta_{\mathrm{sum}} \\ i \sin \theta_{\mathrm{sum}} & \cos \theta_{\mathrm{sum}}) \nonumber \\
&= \mqty(F(1) & i G(1) \\ i G(1) & F(1)) \nonumber \\
&= U_{F G}(1), \\
U_{\vb*{\Phi}}(i)
&= \mqty(\cos \phi_0 & i \sin \phi_0 \\ i \sin \phi_0 & \cos \phi_0) \mqty(i & 0 \\ 0 & - i) \mqty(\cos \phi_1 & i \sin \phi_1 \\ i \sin \phi_1 & \cos \phi_1) \nonumber \\
&= \mqty(i \cos(\phi_0 - \phi_1) & \sin(\phi_0 - \phi_1) \\ - \sin(\phi_0 - \phi_1) & - i \cos(\phi_0 - \phi_1)) \nonumber \\
&= \mqty(i \cos \theta_{\mathrm{diff}} & \sin \theta_{\mathrm{diff}} \\ - \sin \theta_{\mathrm{diff}} & - i \cos \theta_{\mathrm{diff}}) \nonumber \\
&= \mqty(i (- i F(i)) & i G(i) \\ - (- i G(- i)) & - i (i F(- i))) \nonumber \\
&= \mqty(F(i) & i G(i) \\ i G(- i) & F(- i)) \nonumber \\
&= U_{F G}(i).
\end{align}
Note that we have used the following identities from third to forth lines in both evaluations,
\begin{align}
\cos \theta_{\mathrm{sum}} &= \Re[F(1) + i G(1)] = f_{-1} + f_1 = F(1), \\
\sin \theta_{\mathrm{sum}} &= \Im[F(1) + i G(1)] = g_{-1} + g_1 = G(1), \\
\cos \theta_{\mathrm{diff}} &= \Im[F(i) - i G(i)] = - f_{-1} + f_1 = - i F(i) = i F(- i), \\
\sin \theta_{\mathrm{diff}} &= - \Re[F(i) - i G(i)] = g_{-1} - g_1 = i G(i) = - i G(- i).
\end{align}
Since the degrees of $F$ and $G$ are less than one, the equality $U_{\vb*{\Phi}} = U_{F G}$ holds as polynomials.
Hence, $U_{\vb*{\Phi}} = U_{F G}$ for all $w \in U(1)$ and $\vb*{\Phi}$ is a feasible solution.

\black{Next, we prove by induction that, $\vb*{\Phi}$ for $d\geq 2$ can be obtained from recursive decomposition of polynomials into lower-degree polynomials.}
\black{Our aim is} to halve the Laurent polynomials $F$ and $G$ into components of Laurent polynomials of degree less than $l$ and $d - l$ for $l \coloneqq \floor{d / 2}$.
\black{To be precise, we desire to} find polynomials $F_1, G_1, F_2, G_2 \in \LaurentR{w}$ that satisfy all the conditions below:
\begin{enumerate}
\item[(a)] $F_1$, $G_1$ satisfy the conditions in Theorem~\ref{thm:Wz_QSP_FG_cond} for $l$.
\item[(b)] $F_2$, $G_2$ satisfy the conditions in Theorem~\ref{thm:Wz_QSP_FG_cond} for $d - l$.
\item[(c)] $U_{F_1 G_1} U_{F_2 G_2} = U_{F G}$.
\item[(d)] $F_1(1) = 1.$
\item[(e)] $G_1(1) = 0.$
\end{enumerate}
\black{Once the phase factors
 for $F_1$, $G_1$ and $F_2$, $G_2$ are obtained  as $\qty(\phi_0, \phi_1, \ldots, \phi_l)$ and $\qty(\phi'_0, \phi'_1, \ldots, \phi'_{d - l})$, respectively,
we can combine them as $\vb*{\Phi} = \qty(\phi_0, \phi_1, \ldots, \phi_l + \phi'_0, \phi'_1, \ldots, \phi'_{d - l})$ to constitute a feasible solution for polynomials $F$ and $G$. Therefore, the problem boils down to constructing the lower-order polynomials $F_1, G_1, F_2, G_2$ in an appropriate manner.
}


Using a complex vector $(x_{- l}, x_{- l + 2}, \ldots, x_l, y_{- l}, y_{- l + 2}, \ldots, y_l) \in \R^{2 l + 2}$, the Laurent polynomials $F_1(x)$ and $G_1(x)$ are expressed as
\begin{align}
F_1(w) &= \sum_{\substack{\abs{k} \leq l, \\ k \equiv l \bmod 2}} x_k w^k, \\
G_1(w) &= \sum_{\substack{\abs{k} \leq l, \\ k \equiv l \bmod 2}} y_k w^k.
\end{align}
\black{Due to the condition (c)} and the unitarity constraint $U_{F_1 G_1}^\dagger U_{F_1 G_1} = I$, we can express $F_2$ and $G_2$ with $F$, $G$, $F_1$ and $G_1$ as
\begin{align}
F_2(w) &= F_1 \qty(w^{-1}) F(w) + G_1(w) G \qty(w^{-1}), \\
G_2(w) &= F_1 \qty(w^{-1}) G(w) - G_1(w) F \qty(w^{-1}).
\end{align}
Therefore, to fulfill the \black{conditions (a) and (b)} on the degrees of the polynomials, the \black{following equations must hold}:
\begin{align}
\sum_{k_1 + k_2 = k} (f_{k_2} x_{- k_1} + g_{- k_2} y_{k_1}) &= 0, & (k = - d - l, - d - l + 2, \ldots, - d + l - 2) \\
\sum_{k_1 + k_2 = k} (f_{k_2} x_{- k_1} + g_{- k_2} y_{k_1}) &= 0, & (k = d - l + 2, d - l + 4, \ldots, d + l) \\
\sum_{k_1 + k_2 = k} (g_{k_2} x_{- k_1} - f_{- k_2} y_{k_1}) &= 0, & (k = - d - l, - d - l + 2, \ldots, - d + l - 2) \\
\sum_{k_1 + k_2 = k} (g_{k_2} x_{- k_1} - f_{- k_2} y_{k_1}) &= 0. & (k = d - l + 2, d - l + 4, \ldots, d + l)
\end{align}
Moreover, to satisfy the conditions \black{(d) and (e)},  we must have $\sum_k x_k = 1$ and $\sum_k y_k = 0$.

\black{By summarizing the above discussions, we can write down the conditions as}
\begin{equation}\label{eq:halving_conditiong_wz}
\mqty(M_{1 1} & M_{1 2} \\ M_{2 1} & M_{2 2} \\ M_{3 1} & M_{3 2} \\ M_{4 1} & M_{4 2} \\ M_{5 1} & M_{5 2})
\mqty(x_l \\ x_{l - 2} \\ \vdots \\ x_{- l} \\ y_{- l} \\ y_{- l + 2} \\ \vdots \\ y_l)
= \mqty(0 \\ \vdots \\ 0 \\ 1 \\ 0),
\end{equation}
where $M_{1 1}, \ldots, M_{5 2}$ are defined as
\begin{align}
M_{1 1} &\coloneqq \mqty(
f_{- d} & 0 & 0 & \cdots & 0 \\
f_{- d + 2} & f_{- d} & 0 & \cdots & 0 \\
\vdots & \vdots & \ddots & \ddots & \vdots \\
f_{- d + 2 l - 2} & f_{- d + 2 l - 4} & \cdots & f_{- d} & 0
), &
M_{1 2} &\coloneqq \mqty(
g_d & 0 & 0 & \cdots & 0 \\
g_{d - 2} & g_d & 0 & \cdots & 0 \\
\vdots & \vdots & \ddots & \ddots & \vdots \\
g_{d - 2 l + 2} & g_{d - 2 l + 4} & \cdots & g_d & 0
), \\
M_{2 1} &\coloneqq \mqty(
0 & f_d & f_{d - 2} & \cdots & f_{d - 2 l + 2} \\
0 & 0 & f_d & \cdots & f_{d - 2 l + 4} \\
\vdots & \vdots & \ddots & \ddots & \vdots \\
0 & 0 & \cdots & 0 & f_d
), &
M_{2 2} &\coloneqq \mqty(
0 & g_{- d} & g_{- d + 2} & \cdots & g_{- d + 2 l - 2} \\
0 & 0 & g_{- d} & \cdots & g_{- d + 2 l - 4} \\
\vdots & \vdots & \ddots & \ddots & \vdots \\
0 & 0 & \cdots & 0 & g_{- d}
), \\
M_{3 1} &\coloneqq \mqty(
g_{- d} & 0 & 0 & \cdots & 0 \\
g_{- d + 2} & g_{- d} & 0 & \cdots & 0 \\
\vdots & \vdots & \ddots & \ddots & \vdots \\
g_{- d + 2 l - 2} & g_{- d + 2 l - 4} & \cdots & g_{- d} & 0
), &
M_{3 2} &\coloneqq - \mqty(
f_d & 0 & 0 & \cdots & 0 \\
f_{d - 2} & f_d & 0 & \cdots & 0 \\
\vdots & \vdots & \ddots & \ddots & \vdots \\
f_{d - 2 l + 2} & f_{d - 2 l + 4} & \cdots & f_d & 0
), \\
M_{4 1} &\coloneqq \mqty(
0 & g_d & g_{d - 2} & \cdots & g_{d - 2 l + 2} \\
0 & 0 & g_d & \cdots & g_{d - 2 l + 4} \\
\vdots & \vdots & \ddots & \ddots & \vdots \\
0 & 0 & \cdots & 0 & g_d
), &
M_{4 2} &\coloneqq - \mqty(
0 & f_{- d} & f_{- d + 2} & \cdots & f_{- d + 2 l - 2} \\
0 & 0 & f_{- d} & \cdots & f_{- d + 2 l - 4} \\
\vdots & \vdots & \ddots & \ddots & \vdots \\
0 & 0 & \cdots & 0 & f_{- d}
), \\
M_{5 1} &\coloneqq \mqty(1 & \cdots & 1 \\ 0 & \cdots & 0), &
M_{5 2} &\coloneqq \mqty(0 & \cdots & 0 \\ 1 & \cdots & 1).
\end{align}
\black{Using the solutions of Eq.~\eqref{eq:halving_conditiong_wz}, we can construct lower-order polynomials $F_1, G_1, F_2, G_2$. By repeating this procedure until the degrees are 0 or 1, we can determine the phase sequence.
}
\black{Due to Theorem~\ref{thm:Wz_QSP_FG_cond} and the uniqueness of the angle sequence, it is shown that angle sequence exists if and only if Eq.~\eqref{eq:halving_conditiong_wz} is satisfied.
Thus, it is guaranteed that the solution of Eq.~\eqref{eq:halving_conditiong_wz} uniquely exists, up to scale~\cite{chao2020finding}.
}

\section{Details on angle finding algorithm for Generalized QSP}\label{app:angle_finding_gqsp}
\subsection{Completion via Prony's method}
In the proof of completion algorithm via Prony's method for $(W_z, S_x)$-QSP as discussed in Appendix~\ref{sec:wz_qsp_completion_via_Prony's_method}, the parity condition is not necessary and the coefficients of polynomials can be easily extended to the complex number domain.
Hence, we can prove that Prony's method can be applied to Generalized QSP in almost the same way as $(W_z, S_x)$-QSP.

\subsection{Decomposition via carving}

In this section, we describe \black{the carving algorithm introduced in Ref.~\cite[Theorem 3]{motlagh2023generalized}} that finds the phase factors $\vb*{\Phi}$ from the Laurent polynomials $F$ and $G$ satisfying the conditions in Theorem~\ref{thm:GQSP_FG_cond}.
\black{Note that the existence of concrete algorithm for the decomposition step completes the ``if" part of  Theorem~\ref{thm:GQSP_FG_cond}.}
\black{As mentioned in the main text, } the case where $d_- > 0$ can be reduced to the case where $d_- = 0$, so we assume $d_- = 0$ in the following discussion.

First, let us consider the trivial case of $d_{{\rm max}, F} = 0$.
Due to the conditions in Theorem~\ref{thm:GQSP_FG_cond}, the Laurent polynomials $F$ and $G$ can be expressed as $F \equiv e^{i (\lambda + \phi_0)} \cos \theta_0$ and $G \equiv - i e^{i \lambda} \sin \theta_0$ with some $\phi_0, \theta_0, \lambda \in \R$.
Therefore, $\vb*{\Theta} = \qty(- \frac{1}{2} \pi, \pi, \ldots, \pi, \theta_0 - \frac{1}{2} \pi), \vb*{\Phi} = (0, \pi, \ldots, \pi, \phi_0), \lambda$ is a feasible solution, as the following equation holds true:
\begin{align}
R \qty(- \frac{1}{2} \pi, 0, \lambda) & \qty(\prod_{k = 1}^{d_+ - 1} W_0(w) R(\pi, \pi, 0)) W_0(w) R \qty(\theta_0 - \frac{1}{2} \pi, \phi_0, 0) \nonumber \\
&= \mqty(0 & - e^{i \lambda} \\ 1 & 0) \qty(\prod_{k = 1}^{d_+ - 1} \mqty(w & 0 \\ 0 & 1) I) \mqty(w & 0 \\ 0 & 1) \mqty(e^{i \phi_0} \sin \theta_0 & - \cos \theta_0 \\ - e^{i \phi_0} \cos \theta_0 & - \sin \theta_0) \nonumber \\
&= \mqty(0 & - e^{i \lambda} \\ 1 & 0) \mqty(w^{d_+} & 0 \\ 0 & 1) \mqty(e^{i \phi_0} \sin \theta_0 & - \cos \theta_0 \\ - e^{i \phi_0} \cos \theta_0 & - \sin \theta_0) \nonumber \\
&= \mqty(e^{i (\lambda + \phi_0)} \cos \theta_0 & e^{i \lambda} \sin \theta_0 \\ w^{d_+} e^{i \phi_0} \sin \theta_0 & - w^{d_+} \cos \theta_0).
\end{align}
Note that, for $d_+ = 0$, we have $\vb*{\Theta} = (\theta_0)$, $\vb*{\Phi} = (\phi_0)$, $\lambda$ as a feasible solution.

\black{Now we prove by induction on $d_+$ that we can recursively determine $\vb*{\Phi}$.}
Let $f_{d_{{\rm max}/{\rm min}, F}}$ and $g_{d_{{\rm max}/{\rm min}, F}}$ denote the highest/lowest coefficients of $F$ and $G$.
Due to the condition (ii) in Theorem~\ref{thm:GQSP_FG_cond} \black{(i.e., $\abs{F(w)}^2 + \abs{G(w)}^2 = 1$ for $w \in U(1))$}, considering the highest coefficients of both sides, we must have $d_{{\rm max}, F} - d_{{\rm min}, F} = d_{{\rm max}, G} - d_{{\rm min}, G}$ and $f_{d_{{\rm max}, F}} f_{d_{{\rm min}, F}}^* + g_{d_{{\rm max}, G}} g_{d_{{\rm min}, G}}^* = 0$.
Thus, there exists some $\theta_{d_+}, \phi_{d_+} \in \R$ such that $(i g_{d_{{\rm max}, F}}) / f_{d_{{\rm max}, F}} = f_{d_{{\rm min}, F}}^* / (i g_{d_{{\rm min}, G}}^*) = e^{- i \phi_{d_+}} \tan \theta_{d_+}$.
\black{Next, we define lower-order polynomials} $\tilde{F}(w)$ and $\tilde{G}(w)$ as
\begin{align}
\tilde{F}(w) &\coloneqq (e^{- i \phi_{d_+}} \cos \theta_{d_+}) w^{-1} F(w) + (i \sin \theta_{d_+}) w^{-1} G(w), \\
\tilde{G}(w) &\coloneqq (- i e^{- i \phi_{d_+}} \sin \theta_{d_+}) F(w) + (- \cos \theta_{d_+}) G(w).
\end{align}
\black{It can be shown that these polynomials satisfy}
\begin{equation}
\mqty(\tilde{F}(w) & i \tilde{G}(w) \\ \cdot & \cdot) W_0(w) R(\theta_{d_+}, \phi_{d_+}, 0)
= V_{F G}(w),
\end{equation}
which is a consequence from the following calculation:
\begin{align}
& V_{F G}(w) R(\theta_{d_+}, \phi_{d_+}, 0)^{-1} W_0(w)^{-1} \\
&= \mqty(F(w) & i G(w) \\ \cdot & \cdot) \mqty(e^{- i \phi_{d_+}} \cos \theta_{d_+} & e^{- i \phi_{d_+}} \sin \theta_{d_+} \\ \sin \theta_{d_+} & - \cos \theta_{d_+}) \mqty(w^{-1} & 0 \\ 0 & 1) \nonumber \\
&= \mqty((e^{- i \phi_{d_+}} \cos \theta_{d_+}) w^{-1} F(w) + (i \sin \theta_{d_+}) w^{-1} G(w) & i \qty((- i e^{- i \phi_{d_+}} \sin \theta_{d_+}) F(w) + (- \cos \theta_{d_+}) G(w)) \\ \cdot & \cdot) \nonumber \\
&= \mqty(\tilde{F}(w) & i \tilde{G}(w) \\ \cdot & \cdot) \nonumber \\
&= V_{\tilde{F} \tilde{G}}.
\end{align}
We can easily show that $d_{{\rm max}, \tilde{F}} \leq d_{{\rm max}, F} - 1 \leq d_+ - 1$, $d_{{\rm max}, \tilde{G}} \leq d_{{\rm max}, G} - 1 \leq d_+ - 1$, $d_{{\rm min}, \tilde{F}} \geq 0$ and $d_{{\rm min}, \tilde{G}} \geq 0$.
Due to the unitarity of $U_{F G}$, $W_0(w)$ and $R(\theta_{d_+}, \phi_{d_+}, 0)$, $V_{\tilde{F} \tilde{G}}$ is also a unitary matrix.
Hence, $\tilde{F}$ and $\tilde{G}$ satisfy the conditions of Theorem~\ref{thm:GQSP_FG_cond} for $d_+ - 1$. By assumption of the induction, there exist  $\vb*{\Theta} = (\tilde{\theta}_0, \ldots, \tilde{\theta}_{d_+ - 1}) \in \R^{d_+}$, $\vb*{\Phi} = (\tilde{\phi}_0, \ldots, \tilde{\phi}_{d_+ - 1}) \in \R^{d_+}$, and $\tilde{\lambda} \in \R$ such that
\begin{equation}
R(\theta_0, \phi_0, \lambda) \prod_{k = 1}^{d_+ - 1} W_0(w) R(\theta_k, \phi_k, 0)
= \mqty(\tilde{F}(w) & i \tilde{G}(w) \\ \cdot & \cdot).
\end{equation}
Therefore,
 $\vb*{\Theta} = (\tilde{\theta}_0, \ldots, \tilde{\theta}_{d_+ - 1}, \theta_{d_+}) \in \R^{d_+ + 1}, \vb*{\Phi} = (\tilde{\phi}_0, \ldots, \tilde{\phi}_{d_+ - 1}, \phi_{d_+}) \in \R^{d_+ + 1}, \lambda = \tilde{\lambda} \in \R$ is a feasible solution for the given Laurent polynomials $F$ and $G$.


\end{document}